\newcommand{\be}{\begin{equation}}
\newcommand{\ee}{\end{equation}}
\newcommand{\bear}{\begin{eqnarray}}
\newcommand{\eear}{\end{eqnarray}}
\newcommand{\bears}{\begin{eqnarray*}}
\newcommand{\eears}{\end{eqnarray*}}
\newcommand{\bi}{\begin{itemize}}
\newcommand{\ei}{\end{itemize}}
\newcommand{\ben}{\begin{enumerate}}
\newcommand{\een}{\end{enumerate}}
\newcommand{\ceiling}[1]{\left\lceil #1\right\rceil}
\newcommand{\vA}{\mathbf A}
\newcommand{\vM}{\mathbf M}
\newcommand{\vC}{\mathbf C}
\newcommand{\vD}{\mathbf D}
\newcommand{\vV}{\mathbf V}
\newcommand{\vI}{\mathbf I}
\newcommand{\vx}{\mathbf x}
\newcommand{\vdd}{\mathbf d}
\newcommand{\vu}{{\mathbf u}^{(1)}}
\newcommand{\vv}{{\mathbf u}^{(2)}}
\newcommand{\vuu}{{\mathbf v}^{(1)}}
\newcommand{\vvv}{{\mathbf v}^{(2)}}
\newcommand{\vhx}{\hat{\mathbf x}}
\newcommand{\vtx}{\tilde{\mathbf x}}
\newcommand{\vty}{\tilde{\mathbf y}}
\newcommand{\vtu}{\tilde{\mathbf u}^{(1)}}
\newcommand{\vtv}{\tilde{\mathbf u}^{(2)}}
\newcommand{\vhu}{\hat{\mathbf u}^{(1)}}
\newcommand{\vhv}{\hat{\mathbf u}^{(2)}}
\newcommand{\vtA}{\tilde{\mathbf A}}
\newcommand{\vy}{\mathbf y}
\newcommand{\vc}{\mathbf c}
\newcommand{\vd}{\mathbf d}
\newcommand{\ve}{\mathbf e}
\newcommand{\vzero}{\mathbf 0}
\newcommand{\imax}{i_{\rm max}}
\newcommand{\imin}{i_{\rm min}}
\newcommand{\prob}{{\rm Prob}}
\newcommand{\FF}{{\mathbb F}}
\newcommand{\enc}{\textsc{Encode}}
\newcommand{\rec}{\textsc{Reconstruct}}
\newcommand{\rep}{\textsc{Repair}}
\newcommand{\etal}[1]{{\em et al.}}
\newtheorem{theorem}{Theorem}%[section]
\newtheorem{lemma}[theorem]{Lemma}
\newtheorem{proposition}[theorem]{Proposition}%[section]
\theoremstyle{remark}
\newtheorem{remark}{Remark}%[section]
\newtheorem{example}[remark]{Example}%[section]
\title{Synchronizing Edits in Distributed Storage Networks}
\author{ 
	Salim El Rouayheb, 
	\and 
	Sreechakra Goparaju, 
	\and 
	Han Mao Kiah,
	\and 
	Olgica Milenkovic
}
\begin{document}
\maketitle

\begin{abstract}
We consider the problem of synchronizing data in distributed storage networks under an edit model that includes deletions and insertions. We present two modifications of MDS, regenerating and locally repairable codes that allow updates in the parity-check values to be performed with one round of communication at low bit rates and using small storage overhead. Our main contributions are novel protocols for synchronizing both hot and semi-static data and protocols for data deduplication applications, based on intermediary permutation, Vandermonde and Cauchy matrix coding. 
\end{abstract}

\section{Introduction}

Coding for distributed storage systems has garnered significant attention in the past few years~\cite{chang2008bigtable,shvachko2010hadoop,calder2011windows,zeng2009research}, due to the rapid development of information technologies and the emergence of Big Data formats that need to be saved or disseminated in a distributed fashion across a network. Reliable and efficient storage, transfer, retrieval and update of distributed data are challenging computational and networking tasks in which coding theory plays an indispensable role. One of the most significant applications of coding methods is to ensure data integrity under distributed component and node failure, via controlled redundancy introduced at different locations. This is accomplished by implementing modified Reed-Solomon, regenerating and local repair coding solutions~\cite{wicker1999reed,dimakis2010network,huang2012erasure,rashmi2009explicit, gopalan2012locality, gopalan2013explicit} that extend and generalize the traditional coding paradigms that are part of CDs, DVDs, flash memories and RAID like storage systems~\cite{immink2004codes,gregori2003chip,yaakobi2010error}. 

Typical distributed storage systems (DSS) such as Google's Bigtable, Microsoft Azure and the Hadoop Distributed File System, are all designed to scale to very large file sizes, to allow for low-latency data access and to store redundant coded copies of files over a set of servers, disks or nodes connected via a communication network. Two key functionalities of codes for DSS systems are reconstructability of files via access to a subset of the nodes; and content repairability of failed nodes that does not  compromise data reconstruction capabilities. Evidently, both these functionalities need to be retained when the content of files undergoes \emph{edits} which arise in dynamically changing storage systems and networked systems with load asymmetries involving ``hot data''~\cite{weil2006ceph} or data deduplication features~\cite{meister2009multi}. Other examples in which frequent edits are encountered are Dropbox and Sugarsync shared file systems~\cite{drago2012inside}, in which large number of user files are being independently updated at multiple locations. 
Current solutions for synchronization protocols either assume that data in uncoded or do not fully exploit the distributed nature of information, while deduplication methods mostly apply to read-only architectures and are in an early stage of development as far as distributed systems are concerned~\cite{riteau2011shrinker,zhang2012droplet}.

When files are edited or deduplicated, the changes in the information content need to be communicated to the redundant storage nodes so that the DSS retains its reconstruction and repair capabilities, and so that the communication and update costs are minimized. This is a challenging task, as edits such as deletions and insertions, which most commonly arise in practice, cause synchronization issues that appear hard to account for in the DSS encoded domain. In Dropbox and related systems, deletion and insertion synchronization issues are resolved in the \emph{uncoded} domain via the use of the well known rsync~\cite{tridgell1996rsync} and dsync~\cite{knauth2013dsync} or zsync (http://zsync.moria.org.uk/) algorithms, related to a number of file synchronization methods put forward in the information theory literature~\cite{orlitsky2003one,minsky2003set,Venkataramanan.etal:2010,Ma.etal:2011, Yazdi.Dolecek:2012, Bitouze.Dolecek:2013,Venkataramanan.etal:2013, Yazdi.Dolecek:2014}. There, uncoded stored copies of a file are synchronized from an edited user's copy. More specifically, the problems involve a single user and a single node storing a replicate of the user's file. After the user edits his/her file, assuming {\em no knowledge} of the edits, the user and the storage node communicate interactively until their files are matched or until one node matches the master copy of the user. In the DSS scenario we propose to analyze, one may assume both {\em full knowledge} of the edits or unknown edits, pertaining to the scenarios that edits were made by the users themselves or by another user sharing the file, respectively. The core problem in this case is to efficiently update {\em coded} copies in storage nodes with minimal communication rate. It is worth pointing out that this type of question regarding synchronization from edits in DSS systems fundamentally differs from update efficient codes studied in~\cite{Anthapadmanabhan.etal:2010,Rawat.etal:2011,Mazumdar.etal:2012,Mazumdar.etal:2014}. In these contributions, the authors minimize the \emph{number of nodes} that need to be updated when a user's file is changed. Consequently, the edits in consideration may be vaguely viewed as substitutions, in which case, minimizing the communication cost reduces to minimizing the number of storage nodes a user needs to recruit during update. In contrast, 
this line of work is concerned with edits of the form of insertions or deletions,
where such simplified reductions do not apply. And although one may view a deletion as a sequence of substitution errors, using update efficient codes with such preprocessing is highly sub-optimal; in the worst case, one deletion is equivalent to $\ell$ substitutions, where $\ell$ is the file length. 
%(for example, when we delete $x_1$ from $x_1x_2\cdots  x_{\ell-1}x_\ell$ to obtain $x_2x_3\cdots x_{\ell}0$).
%Nevertheless, our schemes may be viewed as augmentation of these update efficient codes.
Furthermore, instead of minimizing the number of storage nodes a user needs to communicate, the objective of synchronizing from deletions and insertions is to minimize the communication cost between the users and storage node even at the cost of introducing a small, controlled amount of storage overhead. 
%We also remark in our work that we do not require the number of nodes a user needs to update to be $o(n)$ 
%as in Anthapadmanabhan \etal{}\cite{Anthapadmanabhan.etal:2010}.

The contributions of the paper are three-fold. First, we describe a number of edit models and accompanying coding protocols for efficient synchronization in a distributed storage environment that maintains regenerative properties. The synchronization protocols are based on a simple new scheme termed \emph{intermediary coding}, which flexibly changes the structure of the code so as to allow reduced communication complexity between nodes in a distributed system. The intermediary coding schemes also offer flexibility in terms of \emph{accommodating a very broad family of coding schemes} used in storage devices (such as erasure codes, regenerating codes, locally repairable codes etc). Second, we study extensions of the edit models related to deduplication applications and different data types, such as hot and semi-static data. Third, we provide worst and average case communication cost analyses for different edit models and synchronization protocols. This analysis reveals that traditional schemes require a significantly higher communication cost than
schemes based on intermediary coding, both in the worst case and average case scenario. This may be attributed to the fact that traditional encoding requires each node to communicate symbols in the \emph{span} of all deletion positions in different nodes, while intermediary coding allows for reducing the communication cost to the number of bits needed to encode the particular edit only.

The paper is organized as follows. Section 2 introduces the relevant notation and provides the precise problem statement. Section 3 contains the description of simple update protocols and their underlying communication cost when traditional DSS encoding methods are used. Section 4 contains our main result, a collection of encoding algorithm and protocols for data synchronization that 
have overhead communication cost that is a constant factor away from the fundamental limits derived in Section 5. 
%have constant overhead communication cost when compared to the fundamental limits derived in Section 5. 
Section 5 then examines both storage overhead and communication overhead and explains how to trade between these two system parameters.
Section 6 provides a short discussion on how to handle unknown edit positions.
The average case analysis of the communication cost of traditional and intermediary coding schemes, the description of hybrid and deduplication schemes are presented in the Appendix section of the paper.
%Proofs of the results pertaining to communication cost analysis are presented in Appendix A and Appendix B.
%Appendix C describe the application of our schemes to post-process data deduplication, while Appendix D provides a short discussion on how to handle unknown edit positions.
%Finally, Appendix E describes an update cost measure that takes into account both storage overhead and communication overhead and explains how to trade between these two system parameters.

\section{Notation and Problem Statement}

Throughout the paper, we use $[n]$ to denote the set of integers $\{1,2,\ldots, n\}$. For a finite set $X$ and $k\le |X|$, the collection of $k$-subsets of $X$ is denoted by $\binom{X}{k}$. The symbol $\FF_q$ is reserved for a finite field of order $q$. The vector space of all vectors of length $\ell$, the vector space of all matrices of dimensions $m\times n$, and the vector space of all tensors of dimensions $m\times n\times \ell$
are denoted by $\FF_q^\ell$, $\FF_q^{m\times n}$ and $\FF_q^{m\times n\times \ell}$, respectively.
%Unless stated otherwise, we assume that our vectors and matrices
%are indexed by $[\ell]$ and $[m]\times[n]$ respectively. 
For $i\in [\ell]$, $x_i$ represents the $i$th coordinate of a vector $\vx\in\FF_q^\ell,$ while $\ve_i$ stands for the $i$th standard basis vector.

Given a matrix $\vM\in\FF_q^{m\times n}$ and a subset of the rows $R\subseteq [m]$ and a subset of the columns $C\subseteq[n]$, $\vM|_{R\times C}$ represents the $|R|\times |C|$ matrix obtained by restricting $\vM$ to rows in $R$ and columns in $C$. We use analogous definitions for tensors.

\vskip 5pt
 
 We provide next a straightforward example that explains the underlying motivation for the work as well as 
 the difficulties encountered in synchronizing coded data.
 
\begin{example}\label{exa:motivation}
Consider two users with data vectors 
$\vu=\left(u^{(1)}_1,u^{(1)}_2,u^{(1)}_3,u^{(1)}_4,u^{(1)}_5\right)$ and 
$\vv=\left(u^{(2)}_1,u^{(2)}_2,u^{(2)}_3,u^{(2)}_4,u^{(2)}_5\right)$, 
both of length five, over $\FF_q$. Here, $q$ is chosen based on the smallest number of consecutive, editable bits. Suppose we have three storage nodes where nodes 1, 2 and 3 store user information $\vu$ and $\vv$, and parity information 
$\vu+\vv$, respectively. Then the system is able to reconstruct both data blocks $\vu$ and $\vv$ and repair any failed node by accessing any two nodes.

Suppose both data blocks are subjected to a single symbol deletion, and that the resulting data blocks are 
$\vtu=\left(u^{(1)}_1,u^{(1)}_2,u^{(1)}_3,u^{(1)}_4\right)$ and $\vtv=\left(u^{(2)}_2,u^{(2)}_3,u^{(2)}_4,u^{(2)}_5\right)$. 
What protocol should the users employ and what information do they have to communicate to the three storage nodes so as to retain both reconstruction and repair functionalities, all with minimal data transmission cost?

One option is for the nodes to update their respective contents to $\vtu$, $\vtv$ and 
\[ \label{eq:nodes}
%\vtu+\vtv=(x_1+y_2,x_2+y_3,x_3+y_4,x_4+y_5).
\vtu+\vtv=\left(u^{(1)}_1+u^{(2)}_2,u^{(1)}_2+u^{(2)}_3,u^{(1)}_3+u^{(2)}_4,u^{(1)}_4+u^{(2)}_5\right).
\]
Clearly, it is both necessary and sufficient for the user with data block $\vtu$ to communicate his/her deletion {\em position} to node 1.
A similar statement holds true for the user with data block $\vtv$ and node 2. But what is the minimum communication complexity needed for
node 3 to update its content to $\vtu+\vtv$? At first glance, it appears that both nodes 1 and 2 should transmit their whole content to node 3, as deletions occurred in their first and last position, respectively. The goal of this work is to show that, akin to the notion of functional repair in DSS \cite{dimakis2010network}, 
one may significantly save in communication complexity by updating the content of node 3 through a flexible change in the code structure.
%We address these and related questions in what follows.
\end{example}
\setcounter{remark}{0}
 
\subsection{Coding for Distributed Storage Systems}

Let $\vx=\left(x^{(1)}, x^{(2)}, \ldots, x^{(B)}\right) \in\FF_q^B$ be an information vector 
to be stored in a distributed storage system (DSS).
We call %$B$ the {\em file size} and 
each $x^{(s)},$ for $s=1,\ldots,B$, a {\em data unit}.
A DSS is equipped with an {\em encoding} function, 
as well as a set of reconstruction and a set of repair algorithms specifying a coding scheme.
The encoding algorithm converts the vector $\vx$ into $n$ vectors of length $\alpha$
and stores them in $n$ {\em storage nodes}, while 
the reconstruction algorithm recovers $\vx$ from the contents of any $k\leq n$ out of $n$ nodes.
In addition, when a node fails, %we download $\beta d$ data units 
one uses the content of a subset of $d \leq n$ nodes to repair the content of the failed node.
%and refer to $d$ as the {\em repair degree}. 

More formally, we have the following definitions.
%let the encoding, reconstruction and repair algorithms be $\enc, \rec, \rep$ respectively
%and consider such functions to be linear maps. 

\begin{enumerate}
\item An {\em encoding} function is a map $\enc: \FF_q^B \to\FF_q^{n\times \alpha}$, where $\alpha$ is a given code parameter that 
has to be chosen so that the encoding function of interest is constructable. 

\item For any $T\in \binom{[n]}{k}$, a map $\rec(T):\FF_q^{k\times\alpha} \to\FF_q^{B}$
is termed a {\em reconstruction} function if for all $\left(x^{(1)}, x^{(2)}, \ldots, x^{(B)}\right)\in \FF_q^B$,
\begin{equation*}
\rec(T)\left.\left(\enc\left(x^{(1)}, x^{(2)}, \ldots, x^{(B)}\right)\right|_{T\times[\alpha]}\right)=\left(x^{(1)}, x^{(2)}, \ldots, x^{(B)}\right).
\end{equation*} 

\item Given a $t\in[n]$ and $T\in \binom{[n]\setminus\{t\}}{d}$, % $T=\{t_1,t_2,\ldots, t_d\}\in \binom{[n]\setminus\{t\}}{d}$,
 a map $\rep(t,T):\FF_q^{d\times\alpha} \to\FF_q^{\alpha}$ 
 is termed an {\em exact repair} function
% \footnote{Strictly speaking, we are defining {\em exact} repair functions.} 
 if %for all $\left(x^{(1)}, x^{(2)}, \ldots, x^{(B)}\right)\in \FF_q^B$,
\begin{equation*}
\rep(t,T)(\vC|_{T\times [\alpha]})=\vC|_{\{t\}\times[\alpha]},
\end{equation*} 
\noindent where $\vC$ is the information stored over $n$ nodes.
\end{enumerate}

Depending on the set of subsets $T$ over which a repair function is defined on, one recovers the various descriptions of codes for DSSs studied in the literature:
\begin{enumerate}
\item {\em Maximum distance separable (MDS) codes}~\cite[Ch. 11]{macwilliams1977theory}. Here, we have $d=k=B$ and $\alpha=1$, and the reconstruction algorithm can also be used as the repair algorithm. We simply denote an MDS code of length $n$ and dimension $k$ by $[n,k]$, following the standard notation for the dimensions of a linear error-correcting code, where it is understood that the minimum Hamming distance of the code equals $n-k+1$. One of the key properties of MDS codes exploited for repair is the fact that every $k$-subset of the $n$ coordinates is an information set. 
%(see for example, \cite{macwilliams1977theory}).

\item {\em Regenerating codes}~\cite{WDR07,RSKR09}. Here, $d\ge k$, and we require a repair function $\rep(t,T)$
for all $t\in[n]$ and all $T\in \binom{[n]\setminus\{t\}}{d}$.
We note that in bandwidth-limited DSSs, it is of importance to define an intermediary function, mapping words from $\FF_q^{d\times\alpha}$ to $\FF_q^{\alpha}$. More precisely, a repair algorithm $\rep(t,T)$ maps inputs from $\FF_q^{d\times\alpha}$ to $\FF_q^{d\times \beta}$ and then to $\FF_q^{\alpha}$, where $\beta$ is usually smaller than $\alpha$ and indicates the required amount of downloaded information (bits).
However, as we observe later, this intermediary function is not needed in our subsequent analysis.

\item {\em Locally repairable codes (LRC)}~\cite{gopalan2012locality,gopalan2013explicit,RKSV13}. Here, $d<k$, and for all $t\in[n]$,
we only require a repair function $\rep(t,T)$ for {\em some} $T\in \binom{[n]\setminus\{t\}}{d}$.
Efficient repair is achieved by minimizing the number of nodes that needs to be contacted. 
In contrast to regenerating codes, we only require information from a particular set of $d$ nodes to repair a node,
and the amount of information downloaded is not considered.
\end{enumerate}
We broadly refer to the three aforementioned families of encodings for DSSs as $(n,k,d,\alpha,B)$ DSS codes, and focus on code maps that are linear. 

\subsection{Problem Description}

Assume next that the units of a data block are updated via deletion or insertion edits%
\footnote{Consider a string $(x_1,x_2,\ldots,x_\ell)$. A {\em deletion} at position $i$ 
results in $(x_1,x_2,\ldots,x_{i-1},x_{i+1},\ldots,x_\ell)$, while an {\em insertion} of symbol $a$ at position $i$ results in $(x_1,x_2,\ldots,x_{i-1},a,x_{i},\ldots,x_\ell)$.}.
%\footnote{Insertions are, in general, treated in an almost identical manner as deletions and will not be explicitly discussed unless there exists a substantial difference in the approach described for deletions. How to address the underlying problems by combining substitution and deletion errors will be discussed elsewhere.}.
For $s\in [B]$, consider the set 
 $N(s)\triangleq\{t\in[n]: \enc(\ve_s)|_{\{t\}\times[\alpha]}\ne \vzero\}$; in words, $N(s)$ denotes the set of nodes that needs to be updated when the data unit indexed by $s$ is edited. We say that the nodes in $N(s)$ are {\em connected to unit $s$}.
 
Given a $(n,k,d,\alpha,B)$ DSS code, one can extend it to a $(n,k,d,\alpha\ell,B\ell)$ DSS code for any choice of a positive integer $\ell$.
This extension is akin to data-striping in RAID systems, amounting to a simple conversion of a symbol to a vector of symbols.
One can also view this construction as a means of dividing the $B\ell$ data units into $\ell$ groups of $B$ symbols each 
and then applying to each group of symbols an $(n,k,d,\alpha,B)$ DSS encoding. 
%Equivalently, in this $[n,k,d]$ regenerating code with parameter set $(\alpha\ell,\beta\ell,B\ell)$,
Equivalently, in this $(n,k,d,\alpha\ell,B\ell)$ DSS coding scheme we may regard the given information as a $B$-tuple of {\em data blocks}, each of length $\ell$, 
say $\left(\vx^{(1)}, \vx^{(2)}, \ldots, \vx^{(B)}\right)$. 
The choice of $\ell$ is dictated by the storage system at hand, in particular, by the size of the block (which is a system parameter) and the expected size of an edit.
Consequently, we assume that for algorithmic approaches that $\ell$ is a fixed system parameter. Nevertheless, for asymptotic analysis, we use the standard modeling assumption in which $\ell$ is allowed to grow arbitrarily large. 

\begin{example}[Continued]
Recall the code of Example \ref{exa:motivation}. There, we tacitly considered a single parity $[3,2]$ MDS code, or a DSS code with parameters $n=3$, $k=2$, $d=2$, $\alpha=1$, $B=2$ and $\ell=5$, with codewords of the form 
$\left(u^{(1)},u^{(2)},u^{(1)}+u^{(2)}\right),$ and $u^{(1)},u^{(2)}$ belonging to a finite field.
Assume that there are two users with data blocks $\vu$ and $\vv$ of length $\ell$. 
Then by having nodes 1, 2 and 3 store $\vu,\vv,\vu+\vv$, we obtain 
a $(3,2,2,\ell,2\ell)$ DSS code. 
Furthermore, we observe that the connected nodes of the user with data block $\vu$ are nodes 1 and 3, 
while the connected nodes of the user with data block $\vv$ are nodes 2 and 3.
\end{example}
\setcounter{remark}{0}

The edit model of interest assumes that the data blocks are subjected to deletions performed in an independent fashion by $B$ different users%
\footnote{It may be possible that a user edits a number of different data blocks. 
However, for simplicity, our model assumes that each data block is edited by one user.}. 
More precisely, two different models for edits are studied, including:
\begin{enumerate}
\item {\em The uniform edits model,} in which each data block has the same number of deletions and 
thus the resulting data blocks all have the same length. Here the number of deletions is $o(\ell/\log \ell)$.
This model is used to describe the main ideas behind the 
work in a succinct and notationally simple manner, but may not be of practical importance. The model is also 
amenable for combinatorial analysis.
%This model has applications in data deduplication, where we discuss in detail in Section \ref{};
\item {\em The nonuniform edits model,} in which each data block has a possibly different number of edits. 
This model may be analyzed both in a combinatorial and probabilistic setting. 
%, where the total number of edits is fixed or where edits arise in a probabilistic manner. 
In the former case, we assume $D \leq B \, \ell $ edits for all $B$ users. In the latter case, we assume that one is given the probability $p$ of deleting any particular symbol in a data block, resulting in an average number of $p \, \ell$ edits per data block.
Note that $p$ may depend on $\ell$.
\end{enumerate}

%via {\em insertions} or {\em deletions}.
More generally, the problem of interest may be stated as follows: find the ``best'' protocol for the $B$ users to communicate their edits to the storage nodes,
so that the storage nodes can update their information while maintaining both reconstruction and repair functionalities.
%Note that communication in this context is unidirectional or non-interactive: the storage nodes are \emph{not allowed} to provide feedback to the users. 
The word ``best'' may refer to the smallest communication cost, smallest required combination of communication and additional storage cost, etc. For simplicity, we first focus on the case of smallest communication cost, defined as the average number of bits transmitted from one user to all the storage nodes. An obvious suboptimal approach is for each user to send the entire data file of length $\ell$ to each connected  storage node, so that each such node may update its information according to the encoding function at hand. However, as the number of edits made to a data block may be (and usually is) much smaller compared to the data block size, a communication cost of $\ell$ symbols or $\ell\log q$ bits may be highly suboptimal.

On the other hand, suppose a user performs a single edit. To encode the information about this edit, one requires 
$\log \ell$ bits for a deletion (for encoding the position) and $\log \ell+ \log q$ bits for an insertion (to encode the position and the symbol). Hence, assuming that the data blocks and edits are uncorrelated, one requires each user to communicate at least $\log \ell+ \log q$ bits for each edit. As a result, it is straightforward to see that a (loose) lower bound on the communication cost needed for synchronization from a \emph{constant number} of deletions is $\log \ell$ (deletions) or $\log \ell+ \log q$ (insertion) bits, as at least one user has to communicate the information about one of its edits. When the number of edits is of the order of the length of the data unit $\ell$, this bound may not apply, as it may be more efficient to communicate ``nonedits''. This difficult issue will not be discussed in this work, although all schemes proposed in this work apply to this case as well, but without a proof of order-optimality. More detailed descriptions and tighter lower bounds on the communication complexity are given in Section \ref{sec:fund+trade}.

In what follows, we propose a number of schemes that achieve a communication cost of $O(\log \ell+\log q)$ bits that is of the order of the intuitive lower bound. To facilitate such low communication cost, we introduce additional storage overhead needed to define an intermediary encoding function, which we refer to as intermediary coding. The gist of the encoding method is to transform the information, and hence the codes applied to data blocks, via permutation, Vandermonde and Cauchy matrices (the resulting schemes are subsequently referred to as Scheme P, V and C, respectively). The key property of the transforms is that they reduce the update and synchronization communication cost by changing the code structure. 
The storage overhead induced by this codes is carefully controlled by choosing the parameters of the corresponding matrices, as described in Section \ref{sec:tradeoff}.
In Section \ref{sec:tradeoff}, we also demonstrate that our schemes are optimal in terms of storage allocation when the number of edits is $o(\ell/\log \ell)$. Under the same condition of $o(\ell/\log \ell)$ edits, we demonstrate in Appendix \ref{sec:prob} that on the average our schemes outperform
schemes that do not utilize intermediary encoding functions.

In general, our derivations and methods do not rely on specific assumptions on the network topology. 
For the proofs of lower bounds or fundamental limits, a user is allowed to communicate with any other user or storage node and vice versa, i.e., 
the storage network is a complete graph. However, users are naturally assumed to communicate only with a ``minimal'' set of storage nodes which contains encodings or systematic repetition of their data block symbols.
These assumptions are summarized in Table \ref{tab:network}.

{
\begin{table}[t]
\centering
\begin{tabular}{|p{4cm}|p{7cm}|c|} 
\hline
& Assumptions & Schematic Diagram \\ \hline
Topology used for proving fundamental performance bounds. &
A user is allowed to communicate with any other user or storage node and vice versa. &
 \xymatrix{
 \bigcirc \ar@{<->} [r] \ar@{<->}[dr]\ar@{<->}[ddr] \ar@{<->}[d]& \Box\ar@{<->} [d] \ar@/^/@{<->}[dd]\\
 \bigcirc \ar@{<->} [r] \ar@{<->}[ur]\ar@{<->}[dr] & \Box\ar@{<->} [d] \\
  & \Box\\
  } \\
  \hline
Traditional Coding Scheme (See Scheme T) &
A user is allowed to communicate to any of his/her connected storage nodes and all users are allowed to have a two-way communication with a designated central node tasked with computing the span of all the deletions. &
 \xymatrix{
 \bigcirc \ar@{->} [r] \ar@{<->}[ddr] & \Box \\
 \bigcirc \ar@{->} [r] \ar@{<->}[dr] & \Box\\
  & \triangle\\
  } \\
  \hline
Intermediary Coding Schemes (See Scheme P and V)&
A user is allowed to communicate to any of his/her neighboring storage nodes. &
 \xymatrix{
 \bigcirc \ar@{->} [r] \ar@{->}[ddr] & \Box \\
 \bigcirc \ar@{->} [r] \ar@{->}[dr] & \Box\\
  & \Box\\
  } \\
  \hline
\end{tabular} 
\vskip 5pt
Here, $\bigcirc$ denotes a user, while $\Box$ and $\triangle$ denote a node and a designated central node, respectively.

\caption{Communication Network Topology Models.}
\label{tab:network}
\end{table} 
}

\section{Communication Cost with Traditional Encoding Schemes}
\label{sec:traditional}

As discussed earlier, assume that one is given an $(n,k,d,\alpha,B)$ DSS code over $\FF_q$ with functions $\enc$, $\rec$ and $\rep$,
and that for a given $\ell$, this code is augmented to an $(n,k,d,\alpha\ell,B\ell)$ DSS code. 
Next, we state explicitly the encoding, reconstruction and repair functions for this scheme and examine the requisite communication between users and storage nodes under a single deletion model. The uniform deletion model may be analyzed in the same manner. We first show that in the worst case, the number of bits communicated is at least $(\ell-1)$ symbols, or $(\ell-1)\log q$ bits; we then proceed to introduce a scheme that lowers this communication cost to an order-optimal level.

More formally, for a fixed $\ell$, in order to construct an $(n,k,d,\alpha\ell,B\ell)$ DSS code, 
we extend $\enc$ to the linear map%
\footnote{For compactness, unless stated otherwise, we drop the transposition symbol $T$ when writing the $B\times\ell$ array representing the $B$ data blocks of length $\ell$. In other words, we write $\left(\vx^{(1)}, \vx^{(2)}, \ldots, \vx^{(B)}\right)$ instead of $\left(\vx^{(1)}, \vx^{(2)}, \ldots, \vx^{(B)}\right)^T$.} 
$\enc(\ell):\FF_q^{B\times\ell} \to \FF_q^{n\times\alpha\times \ell}$, 
such that for $i\in[\ell]$,
\begin{equation}\label{eq:enc}
\enc(\ell)\left.\left(\vx^{(1)}, \vx^{(2)}, \ldots, \vx^{(B)}\right)\right|_{[n]\times [\alpha]\times \{i\}}=
\enc \left(x^{(1)}_i, x^{(2)}_i, \ldots, x^{(B)}_i\right).
\end{equation}

\noindent Hence, we regard the information stored at the $n$ nodes 
as an $n\times\alpha\times \ell$ tensor. For $i\in[\ell]$, 
the $i$th slice of this tensor is the $n\times\alpha$ array
$\enc \left(x^{(1)}_i, x^{(2)}_i, \ldots, x^{(B)}_i\right)$.
We refer to this particular encoding $\enc(\ell)$ as a {\em traditional encoding} function.
In the subsequent sections, we define another encoding function that uses $\enc(\ell)$ as a building block, 
and this novel encoding function will represent a crucial component of our low-communication cost synchronization schemes.

Next, we provide the corresponding reconstruction and repair functions accompanying traditional encoding and 
verify that we indeed have a DSS code. 

In particular, we define 
$\rec(T;\ell):\FF_q^{k\times\alpha\times \ell} \to\FF_q^{B\times\ell}$ for $T\in \binom{[n]}{k}$
and repair function $\rep(t,T;\ell):\FF_q^{d\times\alpha\times \ell } \to\FF_q^{\alpha\times\ell}$
for $t\in[n]$ and $T\in \binom{[n]\setminus\{t\}}{d}$ via 
\begin{align}
\left.\rec(T;\ell)\left(\vC|_{T\times [\alpha]\times[\ell]}\right)\right|_{[B]\times\{i\}}
&\triangleq \rec(T)\left(\vC|_{T\times[\alpha]\times\{i\}}\right)
& \mbox{ for }i\in[\ell]\label{eq:rec},\\
\left.\rep(t,T;\ell)\left(\vC|_{T\times [\alpha]\times[\ell]}\right)\right|_{[\alpha]\times\{i\}}
&\triangleq \rep(t,T)\left(\vC|_{T\times[\alpha]\times\{i\}}\right)
& \mbox{ for }i\in[\ell]\label{eq:rep}.
\end{align} 

\noindent Given access to $k$ nodes, we regard their information as a $k\times\alpha\times\ell$ tensor and apply the classical reconstruction algorithm to the $i$th slice for $i\in[\ell]$. 
Since this slice is in fact $\enc\left(x^{(1)}_i, x^{(2)}_i, \ldots, x^{(B)}_i\right)$, 
we retrieve the data units $\left(x^{(1)}_i, x^{(2)}_i, \ldots, x^{(B)}_i\right)$,
which correspond to the $i$th coordinates of the data blocks $\vx^{(1)},\vx^{(2)},\ldots, \vx^{(B)}$.
Hence, the algorithm given by \eqref{eq:rec} retrieves the complete collection of data blocks of the $B$ users. 
A similar setup holds for the repair algorithm. As a result, $\enc(\ell)$, $\rec(T;\ell)$ and $\rep(t,T;\ell)$ describe an $(n,k,d,\alpha\ell, B\ell)$ DSS code.

\vskip 5pt
 
Suppose after the edits, the data blocks are updated to $\vtx^{(1)}, \vtx^{(2)}, \ldots, \vtx^{(B)}$, each of length $\ell'$.
One straightforward approach to maintain reconstruction and repair properties  after the updates
is to ensure that the information stored at the $n$ nodes is given by $\enc(\ell')\left(\vtx^{(1)}, \vtx^{(2)}, \ldots, \vtx^{(B)}\right)$.
Unfortunately, as we will see in the next example, such an approach requires a communication cost of at least $(\ell-1)\log q$ bits in the worst case scenario.

\begin{example}[Generalized]\label{exa:bad}
Recall the systematic single parity $[3,2]$ MDS code and let the data blocks of users 1 and 2 be $\vu,\vv\in\FF_q^\ell$, respectively.
For this code,  we may use \eqref{eq:enc} and rewrite the information at the storage nodes as
%have the following information
\be \label{eq:bad}
\enc(\ell)\left(\vu,\vv\right)=\left(\begin{array}{c}\vu\\ \vv\\ \vu+\vv\end{array}\right).
\ee
%Specifically, nodes 1, 2 and 3 store the strings $\vx$, $\vy$, $\vx+\vy$, respectively.
\end{example}

Recall that in the case for MDS codes we have $\alpha=1$. 
Hence, the image of the encoding map is simply written as a matrix, instead of a three dimensional tensor. This notational convention applies to all subsequent examples in the paper.

\begin{proposition}\label{claim:trivial}
Let $\vu$ and $\vv$ be strings of length $\ell$.
Assume that exactly one deletion has occurred in $\vu$ and $\vv$.
Let $\vtu$ and $\vtv$ denote the respective edited strings, and 
suppose that the information to be updated at the storage nodes is given by \eqref{eq:bad}.
 %$\enc(\ell-1)\left(\vtx,\vty\right)=\left(\vtx,\vty,\vtx+\vty\right)$.
Then, in the worst case over all possible edit locations, 
the total communication cost is at least $\ell-1$ symbols, or $(\ell-1)\log q$ bits, 
independent of the network topology between users and storage nodes.
%independent on the protocol allowing or disallowing feedback from the storage nodes and .
\end{proposition}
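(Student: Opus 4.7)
The plan is to pick a worst-case pair of edit positions that maximally scrambles the parity stored at node $3$, and then derive a lower bound on the bits flowing into node $3$ via a Shannon-type entropy argument.

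First I would fix the deletion position to be $1$ in $\vu$ and $\ell$ in $\vv$, so that node $3$ must transition from $\vu+\vv=(u^{(1)}_s+u^{(2)}_s)_{s=1}^{\ell}$ to $\vtu+\vtv=(u^{(1)}_{s+1}+u^{(2)}_s)_{s=1}^{\ell-1}$. Next I would treat $\vu,\vv\in\FF_q^\ell$ as independent uniform random vectors and show that the $2\ell-1$ linear forms $\{u^{(1)}_s+u^{(2)}_s:s\in[\ell]\}\cup\{u^{(1)}_{s+1}+u^{(2)}_s:s\in[\ell-1]\}$ on $\FF_q^{2\ell}$ are linearly independent. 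This is a short row-rank check: the forms involving $u^{(1)}_1$ and $u^{(2)}_\ell$ pin down two boundary coefficients, and the remaining cross relations cascade the rest to zero. Consequently $H(\vu+\vv)=\ell\log q$ and $H(\vu+\vv,\vtu+\vtv)=(2\ell-1)\log q$, giving $H(\vtu+\vtv\mid\vu+\vv)=(\ell-1)\log q$.

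The final step is a standard transcript argument: regardless of the network topology or of whether the protocol is interactive, node $3$ must recover $\vtu+\vtv$ as a deterministic function of its initial content and the bits it receives, so $I(\vtu+\vtv;\text{transcript at node }3\mid\vu+\vv)$ equals $H(\vtu+\vtv\mid\vu+\vv)=(\ell-1)\log q$, which is upper bounded by the bit-length of the transcript entering node $3$. Since every such bit must be transmitted on some edge of the network, the total communication cost is at least $(\ell-1)\log q$ bits, i.e., $\ell-1$ $q$-ary symbols.

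I expect the main obstacle to be phrasing this last step so that it applies uniformly across the topology models of Table \ref{tab:network}: shared randomness or routing through intermediaries could correlate transcripts at different parties, but every bit physically received by node $3$ had to be transmitted at least once, which is all the total-cost bound requires. The entropy step itself is routine once the linear-independence calculation is in hand.
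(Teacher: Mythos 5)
Your proposal is correct, but it takes a different route from the paper. The paper also fixes the extremal edit pattern (one deletion at the very first position of one block and the very last position of the other) and then invokes the \emph{fooling set} method from two-party communication complexity: it exhibits a fooling set of size $q^{\ell-1}$ for the function $f(\vu,\vv)=\vtu+\vtv$ and concludes that any deterministic protocol computing $f$ has total cost at least $(\ell-1)\log q$ bits. You instead put the uniform product distribution on $(\vu,\vv)$, verify that the $2\ell-1$ linear forms defining $\vu+\vv$ and $\vtu+\vtv$ are independent so that $H(\vtu+\vtv\mid\vu+\vv)=(\ell-1)\log q$, and then bound the bits that must physically arrive at node $3$ by a transcript/conditional-entropy argument. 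Both are standard lower-bound techniques and both reduce, at bottom, to the fact that given node $3$'s current content the new parity can still take $q^{\ell-1}$ values; your version is more self-contained (no appeal to the fooling-set lemma), makes the topology-independence transparent since it only counts bits received by node $3$, and in fact yields an average-case bound under the uniform distribution, whereas the fooling-set route is the classical worst-case deterministic tool. Two small points to tidy up: the inequality ``entropy of the transcript $\le$ its worst-case bit-length'' needs the usual self-delimiting/prefix-free convention (or you lose an additive constant, which can be avoided by the equivalent counting formulation: with $\vu+\vv$ fixed, the required output ranges over $q^{\ell-1}$ values, so the received string must too); and you should state explicitly that the protocol is deterministic (or condition on any shared randomness), which is also the setting of the paper's claim.
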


\begin{proof}
To prove the claimed result, we adapt the {\em fooling set method} from communication complexity (see \cite[\S 13.2.1]{AroraBarak:2009}), 
standardly used to lower bound the deterministic communication complexity of a function.

Assume that $u^{(1)}_\ell$ and $u^{(2)}_1$ are the deleted coordinates. Then the task of node 3 is to update its value to
$\left(u^{(1)}_1+u^{(2)}_2,u^{(1)}_2+u^{(2)}_3,\ldots, u^{(1)}_{\ell-1}+u^{(2)}_\ell\right)$.

Let $f:\FF_q^\ell\times \FF_q^\ell\to \FF_q^{\ell-1}$ be a function with 
$f(\vu,\vv)=\left(u^{(1)}_1+u^{(2)}_2,u^{(1)}_2+u^{(2)}_3,\ldots, u^{(1)}_{\ell-1}+u^{(2)}_\ell\right)$. 
A {\em fooling set} for $f$ of size $M$ is a subset $S \subseteq \FF_q^\ell\times \FF_q^\ell$ and 
and a value $\vc \in \FF_q^{\ell-1}$ such that 
(a) for every $\left(\vu, \vv\right) \in S$, $f\left(\vu, \vv\right) = \vc$ and (b) for every distinct 
$\left(\vu, \vv\right), \left(\vuu, \vvv\right)\in S$, 
either $f(\vu, \vvv)\ne \vc$ or $f(\vuu, \vv)\ne\vc$.
One can show that if $f$ has a fooling set of size $M$, 
then the total deterministic communication cost for any protocol computing $f$ is at least $\log M$.

To prove the claim, we exhibit next a fooling set of size $q^{\ell-1}$ for the function of interest.
Consider the subset of $\FF_q^\ell\times \FF_q^\ell$ of size $q^{\ell-1}$ defined as 
\[S=\left\{\left(\vu,\vv\right):\vu=\left(0,u^{(1)}_2,u^{(1)}_3,\ldots,u^{(1)}_{\ell}\right),\vv= \left(-u^{(1)}_2,-u^{(1)}_3,\ldots,-u^{(1)}_{\ell},0 \right)\right\},\]
 and let $\vc=\vzero$. 
Then $f \left(\vu,\vv\right)=\vzero$ for all $(\vu, \vv) \in S$. Furthermore, if $\vu\ne \vuu$,  or 
equivalently, if $\left(0,u^{(1)}_2,u^{(1)}_3,\ldots,u^{(1)}_{\ell}\right)\ne \left(0,v^{(1)}_2,v^{(1)}_3,\ldots,v^{(1)}_{\ell}\right)$,
then we can check that $f \left(\vu,\vvv\right)= \left(u^{(1)}_2-v^{(1)}_2,u^{(1)}_3-v^{(1)}_3, \ldots, u^{(1)}_\ell-v^{(1)}_\ell\right)\ne \vzero$. 
Therefore, $S$ is a fooling set of size $q^{\ell-1}$.
\end{proof}

For the general case involving more than two users and more than three storage nodes, one 
can focus on the worst case scenario in which two users each have a single deletion and need to update a parity check value in a common, connected node. The proof of Proposition \ref{claim:trivial} 
can be easily modified to show that in this case, the worst case communication cost remains $(\ell-1)$ symbols, or $(\ell-1)\log q$ bits.

%We highlight that Proposition \ref{claim:trivial} does not assume any network topology between the users and nodes. 
%In other words, a user is allowed to communicate with any other user or storage node and vice versa.
%Next, we assume that users talk to any storage nodes and a designated node is allowed to provide feedback,
%and describe a simple update protocol that achieves total communication cost of $O(\ell\log q)$ bits.

\subsection{Update Protocols: Beyond the Worst Case}

In what follows, we describe a straightforward update protocol for the traditional encoding scheme with edits that are not necessarily confined to the worst case configuration.

Consider a $(n,k,d,\alpha\ell,B\ell)$ DSS code with 
the data blocks of the users equal to $\vx^{(1)}, \vx^{(2)}, \ldots, \vx^{(B)}$, and the information stored in the storage nodes equal to $\enc(\ell)\left(\vx^{(1)}, \vx^{(2)}, \ldots, \vx^{(B)}\right)$.
We consider the uniform edits model and for simplicity, 
assume that there is a single deletion%
\footnote{Insertions can be treated in an almost identical manner as deletions and will not be explicitly discussed.} 
at coordinate $i_s$ in data block $\vx^{(s)}$, 
for $s\in [B]$. Hence, the updated data block length equals $\ell'=\ell-1$.
Let $\vtx^{(1)}, \vtx^{(2)}, \ldots, \vtx^{(B)}$ be the edited data blocks.
In the traditional encoding scheme, to preserve reconstruction and repair properties,
we require the information stored by the nodes to be updated to 
$\enc(\ell-1)\left(\vtx^{(1)}, \vtx^{(2)}, \ldots, \vtx^{(B)}\right)$.

Example \ref{exa:bad} demonstrates that in the worst case, a user needs to transmit $(\ell-1)\log q$ symbols to its connected nodes.
%However, this is not always necessary. 
Clearly, for most edit scenarios, certain portions of the encoded data do not need to be updated. 
Continuing Example \ref{exa:bad}, suppose 
$\vtx=(x_1,x_2,\ldots,x_{\ell-2},x_{\ell})$ 
with deletion at coordinate $\ell-1$ and 
$\vty=(y_1,y_2,\ldots,y_{\ell-2},y_{\ell-1})$ 
with deletion at coordinate $\ell$.
Node 3 needs to update its string to
$\vtx+\vty=(x_1+y_1,x_2+y_2,\ldots,x_{\ell-2}+y_{\ell-2},x_{\ell}+y_{\ell-1})$, and 
it suffices for users 1 and 2 to transmit $x_{\ell}$ and $y_{\ell-1}$, 
respectively, to node 3.
Indeed, to compute $\vtx+\vty$, node 3 only needs to compute the last coordinate 
%that is given by $x_{\ell}+y_{\ell-1}$ 
as the other coordinates remain the same.
\vskip 5pt

Suppose that all users send their deleted coordinates to a designated central storage node, 
and that the designated central node computes $\imax\triangleq\max_{s\in [B]}i_s$ and $\imin\triangleq\min_{s\in [B]}i_s$.

Define  $I=\{i\in[\ell]: \imin\le i\le \imax-1\}$. %(see Figure \ref{fig:schemet} for an illustration). 
Since $\left(\tilde{x}^{(1)}_i,\tilde{x}^{(2)}_i,\ldots,\tilde{x}^{(B)}_i\right)=\left({x}^{(1)}_i,{x}^{(2)}_i,\ldots,{x}^{(B)}_i\right)$ for $i< \imin$, and 
$\left(\tilde{x}^{(1)}_i,\tilde{x}^{(2)}_i,\ldots,\tilde{x}^{(B)}_i\right)=\left({x}^{(1)}_{i+1},{x}^{(2)}_{i+1},\ldots,{x}^{(B)}_{i+1}\right)$ for $i\ge \imax$, we have
\begin{align}
&\enc(\ell-1)\left.\left(\vtx^{(1)}, \vtx^{(2)}, \ldots, \vtx^{(B)}\right)\right|_{[n]\times[\alpha]\times\{i\}} \notag \\
&~~~~~~~~~~~~~~~~~~~~~~~~~~~=
\begin{cases}
\enc(\ell)\left.\left(\vx^{(1)}, \vx^{(2)}, \ldots, \vx^{(B)}\right)\right|_{[n]\times[\alpha]\times\{i\}},
& \mbox{if $i<\imin$},\\
\enc(\ell)\left.\left(\vx^{(1)}, \vx^{(2)}, \ldots, \vx^{(B)}\right)\right|_{[n]\times[\alpha]\times\{i+1\}},
& \mbox{if $i\ge\imax$.}
\end{cases}\label{eq:enct1}
\end{align}

%\begin{figure}
%\begin{center}
%\begin{tabular}{c|p{2cm}c cr@{} cccc@{} lp{2cm}|}
%\multicolumn{3}{c}{} & \multicolumn{5}{c}{$I$}&\multicolumn{2}{c}{}\\[-1ex]
%\multicolumn{3}{c}{} & \multicolumn{5}{c}{\downbracefill}&\multicolumn{2}{c}{}\\ 
%
%\cline{2-11}
%$\vx^{(1)}$ &&	&&$i_1\to\times$&&&& &&\\ 
%
%\cline{2-11}
%$\vx^{(2)}$ &&	$\imin\longrightarrow$&$\times$&&&&& &&\\ 
%
%\cline{2-11}
%$\vx^{(3)}$ &&	&&&&&&$\times$ &$\longleftarrow \imax$&\\ 
%\cline{2-11}
%
%\multicolumn{1}{c}{$\vdots$} & \multicolumn{10}{c}{$\vdots$}\\ 
%
%\cline{2-11}
%$\vx^{(B-1)}$ &&	&&&&$\times$&& &&\\ 
%
%
%\cline{2-11}
%$\vx^{(B)}$ &&	&&&&&&$\times$ &$\longleftarrow \imax$&\\ 
%\cline{2-11}
%
%\end{tabular}
%
%\end{center}
%\caption{An illustration of Scheme T. Deletions occur at different positions in the
%user data blocks, which are communicated to the storage nodes. The storage node determines the range of positions that need to be updated and broadcasts the information to all adjacent user nodes.}
%\label{fig:schemet}
%\end{figure}

In other words, the information stored at coordinates in $[n]\times [\alpha]\times ([\ell']\setminus I)$ need not be updated. Therefore, it suffices to compute $\enc(\ell-1)\left.\left(\vtx^{(1)}, \vtx^{(2)}, \ldots, \vtx^{(B)}\right)\right|_{[n]\times[\alpha]\times I}$ as given by 

\begin{equation}\label{eq:enct2}
\enc(\ell-1)\left.\left(\vtx^{(1)}, \vtx^{(2)}, \ldots, \vtx^{(B)}\right)\right|_{[n]\times[\alpha]\times I}
%=\enc(|I|)\left(\vx^{(1)}_{I\setminus\{i_1\}}, \vx^{(2)}_{I\setminus\{i_2\}}, \ldots, \vx^{(B)}_{I\setminus\{i_B\}}\right),
=\enc(|I|)\left(\vtx^{(1)}_{I}, \vtx^{(2)}_{I}, \ldots, \vtx^{(B)}_{I}\right),
\end{equation}
\noindent where $\vtx^{(s)}_I$ is the updated string restricted to the coordinates in $I$.
In other words, 
%$\vtx^{(s)}_I\triangleq \left(x^{(s)}_{\imin},x^{(s)}_{\imin+1},\ldots, \cancel{x^{(s)}_{i_s}},\ldots, x^{(s)}_{\imax}\right)$.
$\vtx^{(s)}_I\triangleq \left(x^{(s)}_{\imin},x^{(s)}_{\imin+1},\ldots, x^{(s)}_{i_s-1},x^{(s)}_{i_s+1},\ldots, x^{(s)}_{\imax}\right)$.

Consequently, equipped with $\vtx^{(s)}_I$ for $s\in[B]$ and equations \eqref{eq:enct1} and \eqref{eq:enct2}, 
the nodes are able 
to compute $\enc(\ell-1)\left(\vtx^{(1)}, \vtx^{(2)}, \ldots, \vtx^{(B)}\right)$.
We call this update protocol {\em Scheme T}.

%\renewcommand*{\algorithmcfname}{Scheme T}
%\renewcommand{\thealgocf}{}
%\begin{algorithm}[H]
%\SetAlgoLined
%\BlankLine
%\For{$s\in [B]$}{
%User $s$ sends to the designated central storage node the location of its deleted coordinate $i_s$ ($\log \ell$ bits)
%}
%Designated central storage node computes $\imax\gets\max_{s\in [B]}i_s$ and $\imin\gets\min_{s\in [B]}i_s$\\
%Designated central storage node broadcasts to all users and nodes the coordinates $\imax$ and $\imin$
%
%\For{$s\in [B]$}{
%User $s$ sends to its connected  nodes the values 
%$x^{(s)}_{\imin}, x^{(s)}_{\imin+1},\ldots, x^{(s)}_{i_s-1},x^{(s)}_{i_s+1},\ldots, x^{(s)}_{\imax}$ %, except $x^{(s)}_{i_s}$
%%and coordinate $i_s$
%}
%
%
%\For{$t\in [n]$}{
%Node $t$  computes
%$\enc(\ell-1)\left(\vtx^{(1)}, \vtx^{(2)}, \ldots, \vtx^{(B)}\right)$ using 
%\eqref{eq:enct1} and \eqref{eq:enct2}.
%}
%
%\caption{{Suppose $\vx^{(s)}$ is deleted at coordinate $i_s$ for $s\in [B]$.}}
%\end{algorithm}
%\vskip 5pt

\begin{proposition}[Scheme T]\label{prop:schemet}
Consider an $(n,k,d,\alpha\ell,B\ell)$ DSS code and assume single deletions in the user data blocks. 
The updates in accordance to Scheme T result in an $(n,k,d,\alpha(\ell-1), B(\ell-1))$ DSS code.
Each user needs to send out $|I|\log q$ bits to a connected storage node, which in the worst case equals $(\ell-1)\log q$ bits.
Here, $I=\{i\in[\ell]: \imin\le i\le \imax-1\}$, where $\imax=\max_{s\in [B]}i_s$ and $\imin=\min_{s\in [B]}i_s$.
%where $\imin$ and $\imax$ are given in line 4.
\end{proposition}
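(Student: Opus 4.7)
The plan is to verify Scheme T has the three claimed properties by directly inspecting the structure of the traditional encoding map and then reading off the communication cost from what each user must transmit.

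First, I would prove the DSS code claim by showing that after the updates, the tensor stored across the $n$ nodes equals $\enc(\ell-1)\!\left(\vtx^{(1)},\ldots,\vtx^{(B)}\right)$ exactly. The argument splits the $\ell-1$ slices into three ranges. For $i < \imin$, no user has deleted a coordinate at or before position $i$, so the data units at position $i$ are unchanged and equation~\eqref{eq:enct1} shows the stored slice is already correct. For $i \ge \imax$, every user's deletion lies at or before position $i$ (in the original indexing), so the new $i$th data-unit tuple equals the old $(i+1)$st, and again equation~\eqref{eq:enct1} says the required new slice equals the old slice at index $i+1$, which is already stored and simply reindexed. For $i \in I$, the restricted data $\vtx^{(s)}_I = (x^{(s)}_{\imin},\ldots,x^{(s)}_{i_s-1},x^{(s)}_{i_s+1},\ldots,x^{(s)}_{\imax})$ transmitted by user $s$ lets the nodes evaluate $\enc(|I|)$ on these restricted blocks, which by \eqref{eq:enct2} is precisely the required set of updated slices.

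Once the stored tensor equals $\enc(\ell-1)\!\left(\vtx^{(1)},\ldots,\vtx^{(B)}\right)$, the $(n,k,d,\alpha(\ell-1),B(\ell-1))$ DSS property is immediate from the definitions in \eqref{eq:enc}, \eqref{eq:rec}, \eqref{eq:rep}: the reconstruction map $\rec(T;\ell-1)$ and repair maps $\rep(t,T;\ell-1)$ defined by applying $\rec(T)$ and $\rep(t,T)$ slicewise recover $\left(\vtx^{(1)},\ldots,\vtx^{(B)}\right)$ and reproduce any single node content, respectively, by the correctness of the original $(n,k,d,\alpha,B)$ DSS code applied to each slice.

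Next I would count bits. User $s$ transmits exactly the vector $\vtx^{(s)}_I$, together with the deletion position $i_s$ routed through the designated central node; the latter costs $O(\log\ell)$ and is absorbed. Since $\vtx^{(s)}_I$ has $|I|$ entries from $\FF_q$, this accounts for $|I|\log q$ bits per user per connected node. Finally, the worst case occurs when two users have deletions at opposite ends of their data blocks, i.e.\ $\imin = 1$ and $\imax = \ell$, yielding $|I| = \ell - 1$ and matching the worst-case lower bound established in Proposition~\ref{claim:trivial}.

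The only mildly delicate step is the case-analysis at the boundary $i \ge \imax$ in \eqref{eq:enct1}: one has to confirm that the reindexing is consistent across all users, i.e.\ that every user's deletion index $i_s$ lies in $\{\imin,\ldots,\imax\}\subseteq\{1,\ldots,i\}$ when $i \ge \imax$, so that all $B$ data blocks genuinely shift uniformly at position $i$. This is where the uniform (single-deletion) edit assumption is used, and it is routine. The rest is bookkeeping on the traditional encoding's slicewise structure.
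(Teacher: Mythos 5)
Your proposal is correct and follows essentially the same route as the paper: the paper's justification of Proposition \ref{prop:schemet} is exactly the preceding derivation via \eqref{eq:enct1} and \eqref{eq:enct2} — slices with $i<\imin$ or $i\ge\imax$ are unchanged up to reindexing, slices in $I$ are recomputed from the transmitted restrictions $\vtx^{(s)}_I$, and the DSS property follows because the updated content coincides with $\enc(\ell-1)\left(\vtx^{(1)},\ldots,\vtx^{(B)}\right)$. Your bit count of $|I|\log q$ per user with the worst case $\imin=1$, $\imax=\ell$ giving $(\ell-1)\log q$ matches the paper's accounting.
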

%We remark that Proposition \ref{prop:schemet} provides the communication cost in the worst case. 
In Appendix \ref{sec:prob}, we compute the expected communication cost between a user and a connected storage node for a number of probabilistic edit models, and show that for all models considered this cost is of the order of $\ell \log q$ bits, i.e., of the order of the worst case communication scenario rate. The analysis is based on order statistics that characterizes the average span of deletions in the data blocks.
%Such methods are described next, after a short remark regarding nonuniform edit models.
%\vskip 5pt
%
%{\noindent \bf Nonuniform Edits}. Nonuniform user edits can be handled in the same manner as uniform edits, and for completeness, we briefly sketch the update protocol for this scenario. As the resulting data blocks are no longer of the equal lengths, we first determine the longest data block length and 
%pad the shorter data blocks with zeroes. For example, if $\ell'$ is the largest length of the edited data blocks, 
%we represent all data blocks $\vx^{(s)}$ of length $\ell_s<\ell'$ by 
%$\vx=\left(x_1^{(s)},x_2^{(s)},\ldots, x_{\ell_s}^{(s)},0,0,\ldots, 0\right)$. 
%
%Note that in this case, we can {\em only guarantee} that 
%$\enc(\ell')\left.\left(\vtx^{(1)}, \vtx^{(2)}, \ldots, \vtx^{(B)}\right)\right|_{[n]\times[\alpha]\times\{i\}}=
%\enc(\ell)\left.\left(\vx^{(1)}, \vx^{(2)}, \ldots, \vx^{(B)}\right)\right|_{[n]\times[\alpha]\times\{i\}}$ for
% $i<\imin$, where $\imin$ is the smallest edited coordinate. 
% We then redefine $I$ to be $\{i\in[\ell]: i\ge\imin\}$ and proceed as before.

\section{Synchronization Schemes with Order-Optimal Communication Cost}\label{sec:schemes}

To avoid repeated transmissions of all data blocks of users, one needs to develop encoding methods that work around the problems associated with the simple protocol T. We present next our main results, a collection of synchronization schemes that 
achieve communication cost of $O(\log \ell+\log q)$ bits as opposed to $O(\ell\log q)$ %$O(\log \ell \, \log q)$ 
required in the T protocol setting. Our schemes are based on the following observation, apparent from Example \ref{exa:bad}: if one insists on using \eqref{eq:enc} to encode data, then edit updates inevitably incur a communication cost of the order of the data block lengths. Hence, the idea is to introduce an intermediary %(auxiliary)
encoding algorithm that preserves the reconstruction and repair properties of the DSS code, while allowing for significantly lower edit information communication costs. This intermediary encoding algorithm is described in what follows and is illustrated in Figure~\ref{fig:encoding}.

\begin{figure}
\[
\small
\xymatrix@C=50mm@R=0mm{
*+<15mm>[F]{\left(\vx^{(1)}, \vx^{(2)}, \ldots, \vx^{(B)}\right)} 
\ar@<5mm>^(0.55){\enc(\ell)}[r]  &
*+<10mm>[F]{\begin{array}{c}\vC\\\\\vC|_{T\times[\alpha]\times[\ell]}\end{array}}
\ar@<5mm>^(0.45){\rec(T;\ell)}[l] \\
\txt{User's data blocks}&
\txt{Nodes}
}
\]
%\[
%\small
%\xymatrix@C=10mm{
%*!R{\left(\vx^{(1)}, \vx^{(2)}, \ldots, \vx^{(B)}\right)~} \ar@{|->}[r] & 
%*!L{~\enc(\ell)\left(\vx^{(1)}, \vx^{(2)}, \ldots, \vx^{(B)}\right)}  \\
%*!R+<15mm>[F]\txt{Users~~~~~~~~~~~~}   \ar@<5mm>^(0.45){\txt{Encoding}}[r] & 
%*!L+<15mm>[F]\txt{~~~~~~~~~~Nodes} \POS!R(0.9),\ar@(ur,dr)^{\txt{Repair and\\ other functionals}} 
%\ar@<5mm>^(0.72){\txt{Reconstruction}}[l]  \\
%*!R{\left(\vx^{(1)}, \vx^{(2)}, \ldots, \vx^{(B)}\right)~} &
%*!L{~\enc(\ell)\left.\left(\vx^{(1)}, \vx^{(2)}, \ldots, \vx^{(B)}\right)\right|_{T}}
%\ar@{|->}[l]\\
%}
%\]
\[
\footnotesize
\xymatrix@C=25mm@R=0mm{
*+<12mm>[F]{\left(\vx^{(1)}, \vx^{(2)}, \ldots, \vx^{(B)}\right)} 
\ar@<2mm>[r] 
\ar@<8mm>^(0.5){\enc^*(\ell)}[rr] &
*+<4mm>[F.]{\left(\vx^{(1)}\vA^{(1)}, \vx^{(2)}\vA^{(2)}, \ldots, \vx^{(B)}\vA^{(B)}\right)} 
\ar@<2mm>^(0.62){\enc(\ell)}[r]
\ar@<2mm>[l]  &
*+<8mm>[F]{\begin{array}{c}\vC\\\\\vC|_{T\times[\alpha]\times[\ell]}\end{array}}
\ar@<2mm>^(0.38){\rec(T;\ell)}[l] 
\ar@<8mm>^(0.5){\rec^*(T;\ell)}[ll]\\
\txt{User's data blocks}&
&
\txt{Nodes}
}
\]

\caption{a) A diagram illustrating how traditional encoding extends an existing $(n,k,d,\alpha, B)$ DSS code to 
an  $(n,k,d,\alpha\ell, B\ell)$ DSS code so as to handle data blocks of length $\ell$. b) Intermediary encoding that retains both reconstruction and repair functionalities. }
\label{fig:encoding}
\end{figure}
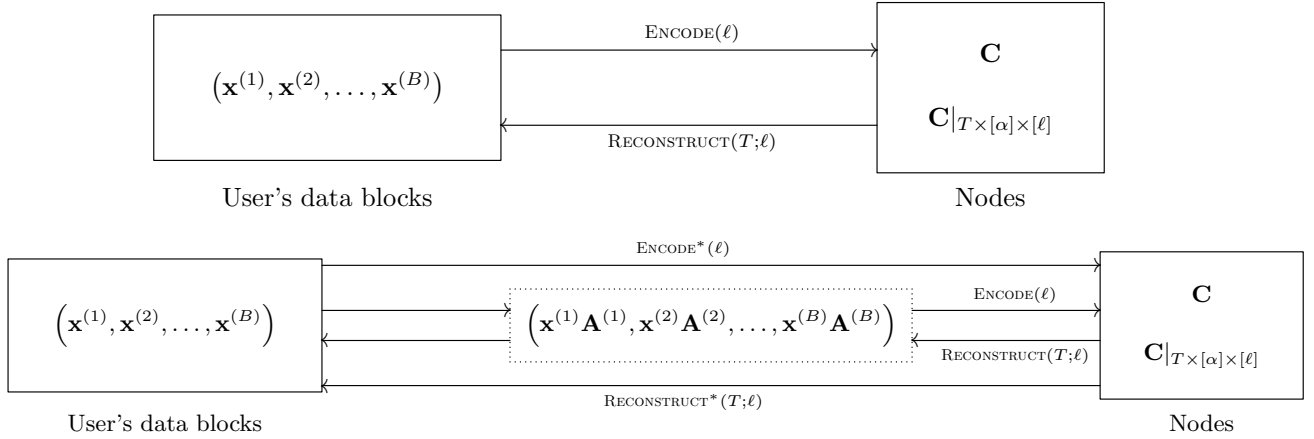

\subsection{Intermediary Encoding}\label{sec:enc1}

Let $\left(\vx^{(1)}, \vx^{(2)}, \ldots, \vx^{(B)}\right)\in\FF_q^{B\times \ell}$ and let 
$\vA^{(1)},\vA^{(2)},\ldots,\vA^{(B)}$ be invertible $\ell\times\ell$ matrices over $\FF_q$.
We define $\enc^*:\FF_q^{ B\times \ell}\to \FF_q^{n\times\alpha\times \ell}$ as

\begin{equation}\label{eq:enc1}
\enc^*(\ell)\left(\vx^{(1)}, \vx^{(2)}, \ldots, \vx^{(B)}\right)\triangleq
\enc(\ell)\left(\vx^{(1)}\vA^{(1)}, \vx^{(2)}\vA^{(2)}, \ldots,\vx^{(B)} \vA^{(B)}\right). \tag{1$*$}
\end{equation}

%Again, for brevity, we write
%\begin{equation*}
%\enc^*(\ell)\left(\vx^{(1)}, \vx^{(2)}, \ldots, \vx^{(B)}\right)=\vC^*=\left(
%\begin{array}{cccc}
%\vc^{*(1,1)} & \vc^{*(1,2)} & \cdots & \vc^{*(1,\alpha)}\\
%\vc^{*(2,1)} & \vc^{*(2,2)} & \cdots & \vc^{*(2,\alpha)}\\
%\vdots & \vdots &\ddots & \vdots\\
%\vc^{*(n,1)} & \vc^{*(n,2)} & \cdots & \vc^{*(n,\alpha)}
%\end{array}\right).
%\end{equation*}

Next, we show that this encoding function preserves reconstruction and repair capabilities.
The proof for the repair capability is straightforward. For $t\in[n]$ and $T\in \binom{[n]\setminus\{t\}}{d}$, simply % $T=\{t_1,t_2,\ldots, t_d\}\in \binom{[n]\setminus\{t\}}{d}$,
define $\rep^*(t,T;\ell):\FF_q^{d\times\alpha\times \ell} \to\FF_q^{\alpha \times\ell}$ via
\begin{equation}\label{eq:rep1}
\rep^*(t,T;\ell)(\vC|_{T\times [\alpha]\times [\ell]})\triangleq\rep(t,T;\ell)(\vC|_{T\times [\alpha]\times [\ell]})\tag{3$*$}\\
%&=\vC^*|_{\{t\}} =\left(\vc^{*(t,1)}, \vc^{*(t,2)},  \ldots ,\vc^{*(t,\alpha)}\right).\notag
\end{equation}
Since the right hand side equals $\vC|_{\{t\}\times [\alpha]\times [\ell]}$, the repair property holds.

For the reconstruction property, consider the following algorithm for any $T\in \binom{[n]}{k}$.

\renewcommand*{\algorithmcfname}{Algorithm 2$*$}
\renewcommand{\thealgocf}{}
\begin{algorithm}[H]
\SetAlgoLined
\BlankLine
$\left(\vy^{(1)},\vy^{(2)},\ldots,\vy^{(B)}\right)\gets \rec(T;\ell)(\vC|_{T\times [\alpha]\times [\ell]})$\\
\For{$s\in [B]$}{
$\vhx^{(s)}\gets\vy^{(s)} \left(\vA^{(s)}\right)^{-1}$ \\
}
\Return{$\left(\vhx^{(1)},\vhx^{(2)},\ldots,\vhx^{(B)}\right)$}
\caption{$\rec^*(T;\ell)(\vC|_{T\times [\alpha]\times [\ell]})$\hspace{70mm} (2$*$)}
\end{algorithm}

The reconstruction algorithm produces the correct output since from $\vy^{(s)}=\vx^{(s)}\vA^{(s)}$ for $s\in [B]$, we have $\vhx^{(s)}=\vx^{(s)}$.

We summarize the observations in the following proposition.
%Hence, provided the existence of invertible $\vA^{(i)}$'s of dimension $\ell\times \ell$,
%we have an $[n,k,d]$ regenerating code with parameter set $(\alpha\ell,\beta\ell,B\ell)$
%whose encoding, reconstruction and regeneration algorithms are given by \eqref{eq:enc1}, (2$*$) and \eqref{eq:rep1}, respectively.

\begin{proposition}\label{prop:enc1}
Consider an $(n,k,d,\alpha,B)$ DSS code with functions $\enc$, $\rec$ and $\rep$.
For any positive integer $\ell$ and $B$ invertible $\ell\times\ell$ matrices $\vA^{(1)},\vA^{(2)},\ldots,\vA^{(B)}$,
the functions $\enc^*$, $\rec^*$ and $\rep^*$ given by  \eqref{eq:enc1}, $(2*)$ and \eqref{eq:rep1} describe an $(n,k,d,\alpha\ell,B\ell)$ DSS code.
\end{proposition}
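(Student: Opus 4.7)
The plan is to verify the defining DSS properties by chasing definitions, exploiting the fact that $\enc^*(\ell)$ is simply the composition of the invertible block-wise change of variables $(\vx^{(1)},\ldots,\vx^{(B)})\mapsto (\vx^{(1)}\vA^{(1)},\ldots,\vx^{(B)}\vA^{(B)})$ with the traditional extended encoder $\enc(\ell)$ from Section 3. Since $\enc(\ell)$ is already a well-defined map $\FF_q^{B\times\ell}\to\FF_q^{n\times\alpha\times\ell}$, the map $\enc^*(\ell)$ defined in \eqref{eq:enc1} automatically has the required signature, so only the reconstruction and repair conditions remain to be checked.

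For the repair property, the key observation is that for any input $(\vx^{(1)},\ldots,\vx^{(B)})$, the stored tensor $\vC=\enc^*(\ell)(\vx^{(1)},\ldots,\vx^{(B)})$ equals $\enc(\ell)(\vy^{(1)},\ldots,\vy^{(B)})$ where $\vy^{(s)}\triangleq \vx^{(s)}\vA^{(s)}$. In other words, $\vC$ is a codeword of the traditional extended DSS code from Section 3 with underlying data blocks $\vy^{(s)}$. Since $\rep(t,T;\ell)$ as specified by \eqref{eq:rep} is guaranteed to satisfy $\rep(t,T;\ell)(\vC|_{T\times[\alpha]\times[\ell]})=\vC|_{\{t\}\times[\alpha]\times[\ell]}$ for every such codeword, the definition $\rep^*(t,T;\ell)\triangleq \rep(t,T;\ell)$ in \eqref{eq:rep1} inherits the repair property immediately; note that the underlying data blocks play no role in this argument.

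For reconstruction, I would simply trace through Algorithm $(2*)$. Applying $\rec(T;\ell)$ to $\vC|_{T\times[\alpha]\times[\ell]}$ returns $(\vy^{(1)},\ldots,\vy^{(B)})=(\vx^{(1)}\vA^{(1)},\ldots,\vx^{(B)}\vA^{(B)})$ by the reconstruction property of the traditional scheme captured in \eqref{eq:rec}. Since each $\vA^{(s)}$ is invertible, the post-multiplication step $\vhx^{(s)}\gets \vy^{(s)}(\vA^{(s)})^{-1}$ recovers $\vx^{(s)}$ exactly, so the algorithm outputs $(\vx^{(1)},\ldots,\vx^{(B)})$ as required. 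This is the single place where the invertibility hypothesis is actually used.

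There is no real obstacle: the content of the proposition is essentially the standard observation that pre-composing an encoder with an invertible linear map yields another valid encoder with identical combinatorial parameters, and the proof amounts entirely to unwinding the definitions of Section 3. The only mild bookkeeping is being careful that the transformation acts row-wise (one $\vA^{(s)}$ per user) rather than on the whole $B\times\ell$ array, which is precisely what makes the row-wise inversion in Algorithm $(2*)$ work.
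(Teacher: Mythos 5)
Your argument is correct and coincides with the paper's own reasoning: the stored tensor is a codeword of the traditional extension $\enc(\ell)$ applied to the transformed blocks $\vy^{(s)}=\vx^{(s)}\vA^{(s)}$, so $\rep^*=\rep$ inherits the repair property immediately, and reconstruction follows from applying $\rec(T;\ell)$ and then inverting each $\vA^{(s)}$, exactly as in Algorithm $(2*)$. No substantive difference from the paper's proof.
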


Simple modifications of~\eqref{eq:enc1} that allow for (a) systematic DSS coding and for (b) variable data block lengths are described in what follows.
\vskip 5pt

\begin{enumerate}[(a)]
\item {\noindent \bf Systematic DSS}.
A {\em systematic} DSS code is a DSS code with the property that the $B$ data blocks are explicitly stored amongst a set of $k$ nodes, termed the systematic nodes.
In other words, the content of the systematic nodes are the blocks $\vx^{(s)}$.
%The $n-k$ nodes are called {\em coded} nodes.

Assume a systematic DSS code with functionalities $\enc,\rec,\rep$.
As before, let $\vA^{(1)},\vA^{(2)},\ldots,\vA^{(B)}$ be invertible $\ell\times\ell$ matrices
and define a systematic encoding function $\enc^*(\ell)\left(\vx^{(1)}, \vx^{(2)}, \ldots, \vx^{(B)}\right)$ according to
\begin{align*}%\label{eq:enc1}
\footnotesize
&\enc^*(\ell)\left.\left(\vx^{(1)}, \vx^{(2)}, \ldots, \vx^{(B)}\right)\right|_{\{t\}\times [\alpha]\times [\ell]}\\ 
&~~~~~~\triangleq\begin{cases}
\enc(\ell)\left.\left(\vx^{(1)}, \vx^{(2)}, \ldots,\vx^{(B)} \right)\right|_{\{t\}\times [\alpha]\times [\ell]}, & \mbox{if $t$ is a systematic node,}\\
\enc(\ell)\left.\left(\vx^{(1)}\vA^{(1)}, \vx^{(2)}\vA^{(2)}, \ldots,\vx^{(B)} \vA^{(B)}\right)\right|_{\{t\}\times [\alpha]\times [\ell]}, & \mbox{otherwise.}
\end{cases}
\end{align*}

Let $\vC$ be the resulting information stored over the $n$ nodes.
The reconstruction and repair algorithms may be used unaltered provided that simple pre-processing of $\vC|_{T\times [\alpha]\times [\ell]}$ is performed first.
Specifically, if $t\in T$ and $t$ is a systematic node, we first let $\vC|_{\{t\}\times[\alpha]\times [\ell]}=\left(\vx^{(s_1)}, \vx^{(s_2)}, \ldots,\vx^{(s_\alpha)} \right)$.
Then, we modify this information to 
\[ \left(\vx^{(s_1)}\vA^{(s_1)}, \vx^{(s_2)}\vA^{(s_2)}, \ldots,\vx^{(s_\alpha)} \vA^{(s_\alpha)}\right). \]
If $\vC'$ denotes the tensor resulting from the previous computation, reconstruction and repair are performed according to $\rec^*(T;\ell)(\vC')$ and $\rep^*(T;\ell)(\vC')$, respectively.
\vskip 5pt

\item {\noindent \bf Data blocks of variable lengths}.
For $s\in[B]$, let $\vx^{(s)}$ be of length $\ell_s$ and let $\vA^{(s)}$ be a {\em right} invertible $\ell_s\times\ell$ matrix,
where $\ell=\max_{s\in [B]} \ell_s$.
Then we define $\enc^*(\ell):\FF_q^{\ell_1}\times\FF_q^{\ell_2}\times\cdots\times\FF_q^{\ell_B}\to\FF_q^{n\times\alpha \times \ell}$ via \eqref{eq:enc1}. If we further define the reconstruction and repair functions via $(2*)$ and \eqref{eq:rep1},
we arrive at an $(n,k,d,\alpha\ell,B\ell)$ DSS code as before.
The right inverse of $\vA^{(s)}$ is required for reconstruction and for simplicity, we use the notation $\left(\vA^{(s)}\right)^{-1}$ for both the inverse and right inverse of a square and rectangular matrix, respectively.
\end{enumerate}

Note that when $\vA^{(s)}=\vI$ for all $s\in [B]$, we recover the traditional encoding function.
In the next subsection, we describe how to choose the matrices $\vA^{(s)}$ and accompanying update protocols so as to ensure significantly lower communication cost between a user and connected storage node as compared to the traditional scheme.

\subsection{Synchronization Schemes}%\label{sec:schemes}

%We propose two synchronization schemes with low communication costs that maintain the reconstruction and repair capabilities of a DSS.
%Observe the key ingredient of Proposition \ref{prop:enc1} is the set of $B$ invertible $\ell\times \ell$ matrices $\vA^{(s)}$'s.
Suppose the data blocks are edited and the resulting data blocks are of length $\ell'$. 
The idea behind our approach is to request the users to 
modify their respective matrices $\vA^{(s)}$ to be invertible $\ell'\times \ell'$ matrices $\vtA^{(s)}$ according to the edits made. Then, users can only transmit the {\em locations} and {\em values} of their edits rather than a whole span of values, with the storage nodes still being
able to update their respective information so that \eqref{eq:enc1} holds.
Since the matrices $\vtA^{(s)}$ are designed to be invertible,
the resulting system remains an $(n,k,d,\alpha\ell',B\ell')$ DSS code.

We propose three different update schemes for the matrices $\vtA^{(s)}$ based on the frequency and extent to which edits are made, catering to the need of~\cite{weil2006ceph}:
%\footnote{See for example, \url{http://public.dhe.ibm.com/software/dw/data/dm-1205multitemp/DB2V10_Multi-Temperature_0412.pdf/}}:
\begin{enumerate}[(4.1)]
\item {\bf Semi-static Data}.
%\footnote{\label{datanote}See {http://ibmdatamag.com/2012/06/is-your-big-data-hot-warm-or-cold/}}. 
Here, we assume that only a {\em constant fraction} of the data blocks is edited by users so that most data blocks retained their original length $\ell$.
In this case, the matrices $\vtA^{(s)}$ -- albeit modified -- remain of dimension $\ell\times \ell$. The most appropriate choice for the matrices are permutation matrices, i.e., $0$-$1$ matrices with exactly one $1$ per row and per column.
We observe that each node stores $\alpha\ell$ symbols.
 
\item {\bf Hot Data}. %\footnotemark[\ref{datanote}]
In contrast to the semi-static case, one may also assume that a significant proportion of the data blocks are edited by users, as is the case for hot data\footnote{In many application, hot data is left uncoded in order to facilitate quick access to information and eliminate the need for re-encoding. Our scheme mitigates both the issues of access and re-encoding, while allowing higher level of data integrity through distributed coding.}. 
In this case, suppose that the resulting data blocks are of length $\ell'<\ell$; then the storage
requirements reduce to each node storing $\alpha\ell'<\alpha\ell$ symbols.
The matrices $\vtA^{(s)}$ have dimension $\ell'\times \ell'$ and an appropriate choice for them are Vandermonde and Cauchy matrices (see~\cite{macwilliams1977theory}), to be discussed in more detail throughout the next sections.
\end{enumerate}

%\subsection{Scheme Based on Permutation Matrices}
{\noindent \bf Scheme Based on Permutation Matrices}.
 Although this scheme applies for general nonuniform edit models,
 %Here since only certain files are edited in each instance, 
and consequently to data blocks of variables lengths, for simplicity we assume edits of the form of a single deletion or insertion. As before, in the nonuniform setting we pad the shorter data blocks by an appropriate number of zeros. 
To do so, we represent the file by padding the data string with zeroes. 
So, a deletion in $\vx$ at position 2 results in $\vtx=(x_1,x_3,\ldots, x_{\ell-1},0)$.

Let data block $\vx^{(s)}$ be edited at coordinate $i_s$.
Recall that we associate with $\vx^{(s)}$ an $\ell\times\ell$ matrix $\vA^{(s)}$.
The matrix $\vA^{(s)}$ is initialized to the identity matrix $\vI$ and 
it remains a permutation matrix after each update. 
%\footnote{Recall that a permutation matrix is a square matrix with exactly one entry `1' in each row and each column and `0's elsewhere.
%One can check that any permutation matrix is invertible.
%}. 
Roughly speaking, the storage nodes maintain the coded information in the original order.
Since with each edit this order changes, the permutation matrix $\vA^{(s)}$ is used 
to keep track of the order in the data blocks relative to that in the storage nodes.
Hence, $\vA^{(s)}$ indicates that instead of editing ``position $i_s$'' of the check nodes, one has to edit a position ``position $j_s$'' in the original order. These assertions are stated and proved rigorously in the proof of Proposition \ref{prop:schemep}.

Since permutation matrices are clearly invertible, the approach results in an $(n,k,d,\alpha\ell,B\ell)$ DSS code by Proposition \ref{prop:enc1}.
The permutation matrix intermediary encoding scheme is detailed in Scheme P. 

\renewcommand*{\algorithmcfname}{Scheme P} \label{alg:P}
\begin{algorithm}[H]
\SetAlgoLined
\BlankLine
\eIf{{\rm edit is a deletion}}{
   $j_s\gets$ coordinate  where the $i_s$th row $\vA^{(s)}$ is one (note: $\vA^{(s)}$ is a permutation matrix)\\
   shift $i_s$th row of $\vA^{(s)}$ to the last row
   }{
   $j_s\gets$ coordinate  where the last row $\vA^{(s)}$ is one\\
   shift last row of $\vA^{(s)}$ to the $i_s$th row
  }
User $s$ sends to the connected storage nodes $N(s)$: the value affected, using $x$ ($\log q$ bits), 
the type of edit -- insertion or deletion (one bit), and 
the coordinate $j_s$ ($\log \ell$ bits)\\
\For{$t\in N(s)$}{
	Compute $\vd=\enc(\ell)\left(\vzero,\ldots,\vzero, x\ve_{j_s},\vzero, \ldots,\vzero \right)|_{\{t\}\times [\alpha]\times [\ell]}$\\
	\eIf{{\rm edit is a deletion}}{
	   subtract $\vd$ from coordinate $j_s$ at each storage node
	   }{
	   add $\vd$ to coordinate $j_s$ at each storage node
	  }
 }
\caption{Suppose $\vx^{(s)}$ is edited at coordinate $i_s$.}
\end{algorithm}

\begin{proposition}[Scheme P]\label{prop:schemep}
Consider an $(n,k,d,\alpha\ell,B\ell)$ DSS code and assume a single edit for a single user. 
The updates in accordance to Scheme P result in an $(n,k,d,\alpha\ell, B\ell)$ DSS code and the user needs to communicate $\log\ell+\log q$ bits to a connected storage node to update her/his information.
\end{proposition}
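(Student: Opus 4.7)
The plan is to prove two facts: first, that after the updates prescribed by Scheme~P the $n$ nodes store exactly $\enc^*(\ell)(\vtx^{(1)},\ldots,\vtx^{(B)})$ using the newly updated permutation matrices $\vtA^{(s)}$, so that Proposition~\ref{prop:enc1} immediately yields an $(n,k,d,\alpha\ell,B\ell)$ DSS code; and second, that the user transmits only $\log q+\log\ell+O(1)$ bits to each connected storage node.

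The key lemma to establish is a single-coordinate difference identity,
\[
\vtx^{(s)}\vtA^{(s)}-\vx^{(s)}\vA^{(s)}=\epsilon\, x\,\ve_{j_s},
\]
where $\epsilon=-1$ for a deletion, $\epsilon=+1$ for an insertion, and $x$ is the affected symbol. To verify it, let $\sigma^{(s)}\colon[\ell]\to[\ell]$ denote the permutation such that the $i$th row of $\vA^{(s)}$ equals $\ve_{\sigma^{(s)}(i)}$, so that $(\vx^{(s)}\vA^{(s)})_{\sigma^{(s)}(i)}=x^{(s)}_i$. For a deletion at coordinate $i_s$, Scheme~P sets $j_s=\sigma^{(s)}(i_s)$ and shifts the $i_s$th row of $\vA^{(s)}$ to the last row, producing $\tilde\sigma^{(s)}$ with $\tilde\sigma^{(s)}(i)=\sigma^{(s)}(i)$ for $i<i_s$, $\tilde\sigma^{(s)}(i)=\sigma^{(s)}(i+1)$ for $i_s\le i\le \ell-1$, and $\tilde\sigma^{(s)}(\ell)=j_s$. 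The zero-padded edited block satisfies the matching identities $\tilde x^{(s)}_i=x^{(s)}_i$ for $i<i_s$, $\tilde x^{(s)}_i=x^{(s)}_{i+1}$ for $i_s\le i\le\ell-1$, and $\tilde x^{(s)}_\ell=0$. A case split on whether $j=\tilde\sigma^{(s)}(i)$ for $i<i_s$, $i_s\le i\le\ell-1$, or $i=\ell$ then shows $(\vtx^{(s)}\vtA^{(s)})_j=(\vx^{(s)}\vA^{(s)})_j$ for all $j\ne j_s$ and $(\vtx^{(s)}\vtA^{(s)})_{j_s}-(\vx^{(s)}\vA^{(s)})_{j_s}=-x^{(s)}_{i_s}$. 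The insertion case is symmetric, with the $j_s$-coordinate changing from the padded $0$ to the inserted value.

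Given this identity, invariance of the stored state follows by linearity of $\enc(\ell)$ together with its slice-by-slice action~\eqref{eq:enc}: only the $j_s$th slice of the tensor $\enc(\ell)(\vx^{(1)}\vA^{(1)},\ldots,\vx^{(B)}\vA^{(B)})$ is affected by updating block $s$, and the change equals $\epsilon$ times $\enc(\vzero,\ldots,x,\ldots,\vzero)$ with $x$ in the $s$th coordinate. This is precisely the tensor $\vd$ that Scheme~P instructs each connected node $t\in N(s)$ to compute and add to or subtract from coordinate $j_s$ of its content. Since a row shift preserves the property of being a permutation matrix, each $\vtA^{(s)}$ is again invertible and Proposition~\ref{prop:enc1} provides the desired DSS code. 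The bit count is immediate: the value $x$ costs $\log q$ bits, the edit-type flag costs one bit, and the transmitted coordinate $j_s\in[\ell]$ costs $\log\ell$ bits, for a total of $\log\ell+\log q+O(1)$.

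The main obstacle is proving the single-coordinate identity, which is the reason for the otherwise mysterious row-shift prescribed by Scheme~P. The shift is chosen so that the index shifting induced by the edit on $\vtx^{(s)}$ is exactly cancelled by the corresponding shift of the surviving rows of $\vtA^{(s)}$ at all coordinates different from $j_s$, leaving the whole effect of the edit concentrated at a single position of $\vtx^{(s)}\vtA^{(s)}$. Once the identity is verified in the base case $\vA^{(s)}=\vI$, the same argument applies inductively to an arbitrary sequence of edits, because the verification only uses the abstract permutation structure of $\vA^{(s)}$ and not the fact that it starts at the identity.
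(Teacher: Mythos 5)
Your proposal is correct and follows essentially the same route as the paper's proof: both reduce the claim to the single-coordinate identity $\vx^{(s)}\vA^{(s)}-\vtx^{(s)}\vtA^{(s)}=\pm x\,\ve_{j_s}$ (established by cancelling the rows before and after the edited position against the shifted rows of the updated permutation matrix) and then invoke linearity of $\enc(\ell)$ together with Proposition~\ref{prop:enc1} and the invertibility of permutation matrices, with the bit count read off directly from the transmitted value, flag, and coordinate. The only differences are cosmetic: you make the permutation $\sigma^{(s)}$ explicit and write out the insertion case with a sign $\epsilon$, whereas the paper treats only the deletion case and notes that insertions are handled similarly.
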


Before we prove this proposition, we illustrate Scheme P via an example.

\begin{example} Consider a simple example of two data blocks $\vu$ and $\vv$, shown below, for which $\ell=5$, as part of a $[3,2]$ MDS code over $\FF_5$. Let $\vD$ denote the tensor given by 
$\enc(\ell)\left(\vzero,\ldots,\vzero, x\ve_{j_s},\vzero, \ldots,\vzero \right)$ in line 10.
%Consider the field of order five.

\vskip 5pt

{\scriptsize
\centering
\renewcommand{\arraystretch}{1}
%\begin{tabular}{ccp{15mm}p{10mm} cccc}\hline
\begin{tabular}{ccp{15mm} cccc}\hline

$\vu$ & 
$\vv$ &
edit &
%&
$\vD$ &
$\enc(5)\left(\vu\vA^{(1)},\vv\vA^{(2)}\right)$ &
$\vA^{(1)}$ &
 $\vA^{(2)}$ \\ \hline

$(1,2,3,4,4)$ &
 $(1,1,1,1,1)$ &
--- &
%--- &
--- &
$\left(
\arraycolsep=2pt
\begin{array}{ccccc}
1 & 2 & 3 & 4 & 4 \\
1 & 1 & 1& 1& 1 \\
2 & 3 & 4 & 0 &0
\end{array}
 \right)$ &
$\left( \arraycolsep=2pt
\begin{array}{ccccc}
1 & 0 & 0 & 0 & 0\\
0 & 1 & 0 & 0 & 0\\
0 & 0 & 1 & 0 & 0\\
0 & 0 & 0 & 1 & 0\\
0 & 0 & 0 & 0 & 1
\end{array}
 \right)$ &
 $\left( \arraycolsep=2pt
\begin{array}{ccccc}
1 & 0 & 0 & 0 & 0\\
0 & 1 & 0 & 0 & 0\\
0 & 0 & 1 & 0 & 0\\
0 & 0 & 0 & 1 & 0\\
0 & 0 & 0 & 0 & 1
\end{array}
 \right)$ \\
 
 $(1,3,4,4,0)$ &
 $(1,1,1,1,1)$ &
deletion at position $2$ of $\vx$ &
%$j_s\gets 2$, $x\gets 2$  &
$\left(
\arraycolsep=2pt
\begin{array}{ccccc}
0 & 2 & 0 & 0 & 0 \\
0 & 0 & 0 & 0& 0\\
0 & 2 & 0 & 0 &0
\end{array}
 \right)$&
$\left(
\arraycolsep=2pt
\begin{array}{ccccc}
1 & 0 & 3 & 4 & 4 \\
1 & 1 & 1& 1& 1 \\
2 & 1 & 4 & 0 &0
\end{array}
 \right)$ &
$\left( \arraycolsep=2pt
\begin{array}{ccccc}
1 & 0 & 0 & 0 & 0\\
0 & 0 & 1 & 0 & 0\\
0 & 0 & 0 & 1 & 0\\
0 & 0 & 0 & 0 & 1\\
0 & 1 & 0 & 0 & 0
\end{array}
 \right)$ &
 $\left( \arraycolsep=2pt
\begin{array}{ccccc}
1 & 0 & 0 & 0 & 0\\
0 & 1 & 0 & 0 & 0\\
0 & 0 & 1 & 0 & 0\\
0 & 0 & 0 & 1 & 0\\
0 & 0 & 0 & 0 & 1
\end{array}
 \right)$ \\

$(1,3,4,0,0)$ &
 $(1,1,1,1,1)$ &
deletion at position $3$ of $\vx$ &
%$j_s\gets 3$, $x\gets 4$  &
$\left(
\arraycolsep=2pt
\begin{array}{ccccc}
0 & 0 & 0 & 4 & 0 \\
0 & 0 & 0 & 0& 0\\
0 & 0 & 0 & 4 &0
\end{array}
 \right)$&
$\left(
\arraycolsep=2pt
\begin{array}{ccccc}
1 & 0 & 3 & 0 & 4 \\
1 & 1 & 1& 1& 1 \\
2 & 1 & 4 &1 &0
\end{array}
 \right)$ &
$\left( \arraycolsep=2pt
\begin{array}{ccccc}
1 & 0 & 0 & 0 & 0\\
0 & 0 & 1 & 0 & 0\\
0 & 0 & 0 & 0 & 1\\
0 & 1 & 0 & 0 & 0\\
0 & 0 & 0 & 1 & 0
\end{array}
 \right)$ &
 $\left( \arraycolsep=2pt
\begin{array}{ccccc}
1 & 0 & 0 & 0 & 0\\
0 & 1 & 0 & 0 & 0\\
0 & 0 & 1 & 0 & 0\\
0 & 0 & 0 & 1 & 0\\
0 & 0 & 0 & 0 & 1
\end{array}
 \right)$ \\ 

$(1,4,3,4,0)$ &
 $(1,1,1,1,1)$ &
insertion of $4$ at position $2$ of $\vx$ &
%$j_s\gets 4$, $x\gets 4$  &
$\left(
\arraycolsep=2pt
\begin{array}{ccccc}
0 & 0 & 0 & 4 & 0 \\
0 & 0 & 0 & 0& 0\\
0 & 0 & 0 & 4 &0
\end{array}
 \right)$&
$\left(
\arraycolsep=2pt
\begin{array}{ccccc}
1 & 0 & 3 & 4 & 4 \\
1 & 1 & 1& 1& 1 \\
2 & 1 & 4 &0 &0
\end{array}
 \right)$ &
$\left( \arraycolsep=2pt
\begin{array}{ccccc}
1 & 0 & 0 & 0 & 0\\
0 & 0 & 0 & 1 & 0\\
0 & 0 & 1 & 0 & 0\\
0 & 0 & 0 & 0 & 1\\
0 & 1 & 0 & 0 & 0\\
\end{array}
 \right)$ &
 $\left( \arraycolsep=2pt
\begin{array}{ccccc}
1 & 0 & 0 & 0 & 0\\
0 & 1 & 0 & 0 & 0\\
0 & 0 & 1 & 0 & 0\\
0 & 0 & 0 & 1 & 0\\
0 & 0 & 0 & 0 & 1
\end{array}
 \right)$ \\ 

 \hline
 
\end{tabular}
}

We reconstruct the data blocks using nodes or rows $1$ and $3$.
\begin{enumerate}
\item From row $1$ and $3$, we infer that row $2$ is $(1,1,1,1,1)$.
\item Hence, we obtain% 
\begin{align*}
\vhu&=(1,0,3,4,4)\left(\vA^{(1)}\right)^{-1}= (1,4,3,4,0),\\
\vhv&=(1,1,1,1,1)\left(\vA^{(2)}\right)^{-1}=(1,1,1,1,1),
\end{align*}
\noindent as desired.
\end{enumerate}
\end{example}

\begin{proof}[Proof of Proposition \ref{prop:schemep}]
We demonstrate the correctness of Scheme P for the case of a deletion. Insertions
are handled similarly.

We first show that the updates performed as part of the scheme result in an $(n,k,d,\alpha\ell, B\ell)$ DSS code. It suffices to show that the nodes store 
$\enc(\ell)\left(\vx^{(1)}\vA^{(1)},\vx^{(2)}\vA^{(2)},\ldots,\vtx^{(s)}\vtA^{(s)},\ldots,\vx^{(B)}\vA^{(B)}\right)$, where $\vtA^{(s)}$ is the matrix resulting from instruction performed from 
line 1 to line 7. 
Observe that prior to the edit, the nodes stored   
$\enc(\ell)\left(\vx^{(1)}\vA^{(1)},\vx^{(2)}\vA^{(2)},\ldots,\vx^{(s)}\vA^{(s)},\ldots,\vx^{(B)}\vA^{(B)}\right)$. Define 
\begin{align*}
\vD 
&\triangleq\enc(\ell)\left(\vx^{(1)}\vA^{(1)},\vx^{(2)}\vA^{(2)},\ldots,\vx^{(s)}\vA^{(s)},\ldots,\vx^{(B)}\vA^{(B)}\right)\\
&-\enc(\ell)\left(\vx^{(1)}\vA^{(1)},\vx^{(2)}\vA^{(2)},\ldots,\vtx^{(s)}\vtA^{(s)},\ldots,\vx^{(B)}\vA^{(B)}\right)\\
&=\enc(\ell)\left(\vzero,\ldots,\vzero, \vx^{(s)}\vA^{(s)}-\vtx^{(s)}\vtA^{(s)},\vzero, \ldots,\vzero \right), 
\end{align*}
and suppose that the following claim were true
\be
\vx^{(s)}\vA^{(s)}-\vtx^{(s)}\vtA^{(s)}=x\ve_{j_s}. \label{eq:schemep}
\ee
Then, the updates in lines 9 to 16 have a net effect of subtracting $\vD$ from the information stored at the nodes (recall that $\enc(\ve_s)|_{\{t\}\times[\alpha]}= \vzero$ for $t\notin N(s)$).
This completes the proof.

Hence, it remains to show that \eqref{eq:schemep} is correct.
Without loss of generality, let $\vx^{(s)}=(x_1,x_2,\ldots,x_\ell)$. 
Then $\vtx^{(s)}=(x_1,x_2,\ldots,x_{i_s-1},x_{i_s+1},\ldots,x_\ell,0)$ and $x=x_{i_s}$.

Since $\vA^{(s)}|_{\{i\}\times [\ell]}=\vtA^{(s)}|_{\{i\}\times [\ell]}$ for $i\le i_s-1$,
we have $x_i\vA^{(s)}|_{\{i\}\times [\ell]}-x_i\vtA^{(s)}|_{\{i\}\times [\ell]}=\vzero$.  
%and $\vA^{(s)}|_{\{i\}\times [\ell]}=\vtA^{(s)}|_{\{i-1\}\times [\ell]}$ for $i_s+1\le i\le \ell$,
%for $i\le i_s-1$ and 
Similarly, $x_i\vA^{(s)}|_{\{i\}\times [\ell]}-x_i\vtA^{(s)}|_{\{i-1\}\times [\ell]}=\vzero$ for $i_s+1\le i\le \ell$. Hence, the left hand side yields $x_{i_s}\vtA^{(s)}|_{\{i_s\}\times [\ell]}-0\cdot\vtA^{(s)}|_{\{\ell\}\times [\ell]}=x\ve_{j_s}$, as needed.
\end{proof}

%\subsection{Scheme Based on Vandermonde Matrices}

{\noindent \bf Scheme Based on Vandermonde Matrices}.
As for the case of coding with permutation matrices, the scheme using Vandermonde 
matrices applies to nonuniform edits, but for simplicity we assume that there is a single deletion 
at coordinate $i_s$ in data block $\vx^{(s)}$, for $s\in [B]$. Nevertheless, at the end of the section, we briefly outline minor algorithmic 
changes that need to be performed in order to accommodate the non-uniform deletion model.

When one deletion is present, the updated data block length is $\ell'=\ell-1$. Recall that an $\ell \times \ell$ Vandermonde matrix is an 
invertible matrix of the form
\[ \left[ \begin{array}{ccccc}
a_1 & a_1^2 & \ldots & a_1^{\ell-1} & a_1^{\ell} \\
a_2 & a_2^2 & \ldots & a_2^{\ell-1} & a_2^{\ell} \\
\ldots & \ldots & \ldots & \ldots & \ldots \\
a_{\ell-1} & a_{\ell-1}^2 & \ldots & a_{\ell-1}^{\ell-1} & a_{\ell-1}^{\ell} \\
a_{\ell} & a_{\ell}^2 & \ldots & a_{\ell}^{\ell-1} & a_{\ell}^{\ell} 
\end{array}
\right],\]
where $a_1,\ldots,a_{\ell}$ are distinct values over an appropriate field. 

Recall that we associate with $\vx^{(s)}$ an $\ell\times\ell$ matrix $\vA^{(s)}$.
After synchronization, we want the updated matrix $\vtA^{(s)}$ to be of dimension $(\ell-1)\times(\ell-1)$ and invertible, and the information in the $n$ storage nodes reduced to $\alpha\times (\ell-1)$ arrays.
%Furthermore, we want to reduce the size of the storage nodes to an $n\times \alpha(\ell-1)$ matrix.

The deleted values $x^{(1)}_{i_1},x^{(2)}_{i_2},\ldots,x^{(B)}_{i_B}$
are stored in the storage nodes as % (represented by a $n\times\alpha\ell$ matrix) as
\begin{equation}\label{eq:D}
\vD \triangleq \enc(\ell)\left(x^{(1)}_{i_1}\vA|_{\{i_1\}\times[\ell]},x^{(2)}_{i_2}\vA|_{\{i_2\}\times[\ell]},\ldots,
x^{(B)}_{i_B}\vA|_{\{i_B\}\times[\ell]}\right).
\end{equation}
Hence, when given the values $x^{(s)}_{i_s}$ and positions $i_s$,
each node $t$ may subtract the vector $\vD|_{\{t\}\times[\alpha]\times [\ell]}$ from its content. % of length $\alpha\ell$. 
To reduce the size of the storage node arrays, we simply remove the coordinates
in the set $[n]\times [\alpha]\times \{\ell\}$. %of $\vc^{*(t,j)}$ for $(t,j)\in [n]\times[\alpha]$ (see \eqref{eq:enc1}).
Suppose $\vtA^{(s)}$ is the $(\ell-1)\times(\ell-1)$ matrix obtained from $\vA^{(s)}$ by 
removing the $i_s$th row and last column.
It is easy to check that {\em a posteriori} the edit, the storage nodes contain the tensor
\begin{equation}
\enc(\ell-1)\left(\vtx^{(1)}\vtA^{(1)}, \vtx^{(2)}\vtA^{(2)}, \ldots,\vtx^{(B)} \vtA^{(B)}\right).
\label{eq:schemev}
\end{equation}

For the system to be an $(n,k,d,\alpha(\ell-1),B(\ell-1))$ DSS code,
we require $\vtA^{(s)}$ to remain invertible.
This is clearly true if the matrix $\vA^{(s)}$ is Vandermonde.
We refer to the method as Scheme V, the details of which are given in what follows.

\renewcommand*{\algorithmcfname}{Scheme V}
\begin{algorithm}[H]
\SetAlgoLined
\BlankLine
\For{$s\in [B]$}{
User $s$ sends to all connected storage nodes its deleted value deleted -- $x^{(s)}_{i_s}$ -- ($\log q$ bits) 
 as well as the coordinate $i_s$ of the deletion ($\log \ell$ bits)\\
 $\vA^{(s)}\gets  \vA^{(s)}$ via removal of the $i_s$th row and last column
}
\For{$t\in [n]$}{
Using \eqref{eq:D}, compute and subtract $\vD|_{\{t\}\times[\alpha]\times[\ell]}$\\ %and subtract from $\vD$\\
Remove the $(t,j,\ell)$th coordinate, for all $j\in[\alpha]$
}
\caption{{Symbol $\vx^{(s)}$ is deleted at coordinate $i_s$, $s\in [B]$.}}
\end{algorithm}

\begin{proposition}[Scheme V]\label{prop:schemev}
Consider an $(n,k,d,\alpha\ell,B\ell)$ DSS code and assume single deletions in each user data block. 
The updates in accordance to Scheme V result in an $(n,k,d,\alpha(\ell-1), B(\ell-1))$ DSS code 
and each user needs to communicate $\log\ell+\log q$ bits to connected storage nodes to update their information.
\end{proposition}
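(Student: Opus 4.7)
The plan is to verify three things in turn: (a) the matrices $\vtA^{(s)}$ produced by the update remain invertible, so that Proposition~\ref{prop:enc1} supplies us an $(n,k,d,\alpha(\ell-1),B(\ell-1))$ DSS code as soon as the stored tensor has the correct form; (b) the actions of subtracting $\vD$ and discarding the last slice of coordinates transform the stored tensor into $\enc(\ell-1)(\vtx^{(1)}\vtA^{(1)},\ldots,\vtx^{(B)}\vtA^{(B)})$ as required by \eqref{eq:enc1}; and (c) the communication cost is $\log \ell+\log q$ bits per user, which is immediate from the description of the scheme (each user transmits a single field symbol and a single position).

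For (a), recall that the $(i,j)$-entry of $\vA^{(s)}$ is $a_i^j$, for $\ell$ distinct values $a_1,\ldots,a_\ell$ chosen in $\FF_q$; the scheme requires these to be nonzero, which is possible provided $|\FF_q|\ge \ell+1$. Removing the $i_s$th row and the last column gives a matrix whose nonzero row $i\ne i_s$ equals $(a_i,a_i^2,\ldots,a_i^{\ell-1})=a_i\cdot(1,a_i,\ldots,a_i^{\ell-2})$. Hence $\vtA^{(s)}$ factors as a diagonal matrix with nonzero entries $\{a_i: i\ne i_s\}$ times a standard Vandermonde matrix on the $\ell-1$ distinct nodes $\{a_i:i\ne i_s\}$, and is therefore invertible.

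The main step is (b). By linearity of $\enc(\ell)$ and \eqref{eq:D}, the content of the storage nodes after the subtraction of $\vD$ equals $\enc(\ell)(\vz^{(1)},\ldots,\vz^{(B)})$, where
\begin{equation*}
\vz^{(s)}\;\triangleq\;\vx^{(s)}\vA^{(s)}-x^{(s)}_{i_s}\vA^{(s)}|_{\{i_s\}\times[\ell]}\;=\;\sum_{j\ne i_s}x^{(s)}_j\vA^{(s)}|_{\{j\}\times[\ell]}.
\end{equation*}
The key identity is that, for every $i\in[\ell-1]$, the $i$th coordinate of $\vz^{(s)}$ coincides with the $i$th coordinate of $\vtx^{(s)}\vtA^{(s)}$. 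This follows by reindexing: for $j\in[\ell-1]$, the $j$th row of $\vtA^{(s)}$ agrees with the row of $\vA^{(s)}$ of original index $j$ (if $j<i_s$) or $j+1$ (if $j\ge i_s$) restricted to the first $\ell-1$ columns, while the corresponding entry $\tilde{x}^{(s)}_j$ equals $x^{(s)}_j$ or $x^{(s)}_{j+1}$ respectively. Summing over $j$ reproduces $\sum_{j'\ne i_s}x^{(s)}_{j'}\vA^{(s)}_{j',i}=(\vz^{(s)})_i$. Using \eqref{eq:enc} slice-by-slice over $i\in[\ell-1]$ then yields
\begin{equation*}
\enc(\ell)(\vz^{(1)},\ldots,\vz^{(B)})\Big|_{[n]\times[\alpha]\times[\ell-1]}\;=\;\enc(\ell-1)\bigl(\vtx^{(1)}\vtA^{(1)},\ldots,\vtx^{(B)}\vtA^{(B)}\bigr),
\end{equation*}
so discarding the $\ell$th slice leaves precisely the tensor demanded by \eqref{eq:enc1}. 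Combining (a), (b), and (c) and invoking Proposition~\ref{prop:enc1} completes the proof.

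The only genuinely delicate point will be bookkeeping the reindexing in step (b) carefully, since the column removal interacts with the row removal in a way that is easy to get wrong by one index; the invertibility and communication cost parts are essentially mechanical once the Vandermonde structure and the content of the user's message are made explicit.
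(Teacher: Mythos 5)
Your proof is correct and follows essentially the same route as the paper: the reindexing cancellation in your step (b) is exactly the paper's key identity $\left.\left(\vx^{(s)}\vA^{(s)}\right)\right|_{[\ell-1]}-\vtx^{(s)}\vtA^{(s)}=\left.x^{(s)}_{i_s}\vA^{(s)}\right|_{\{i_s\}\times[\ell-1]}$, combined with linearity of $\enc(\ell)$ and Proposition~\ref{prop:enc1}. Your explicit factorization showing the truncated Vandermonde matrix remains invertible (and the attendant requirement that the $a_i$ be nonzero) fills in a detail the paper merely asserts, but it does not change the argument.
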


We again illustrate the communication scheme via an example.

\begin{example} Assume that $\ell=4$ and that the DSS code is a $[3,2]$ MDS code over $\FF_5$. 
As before, we choose two data blocks $\vu$ and $\vv$ as shown below to illustrate the scheme.

\vskip 5pt

{\scriptsize
\begin{center}
\begin{tabular}{cccc cccc}\hline
$\vu$ & 
$i_1$ &
$\vv$ &
$i_2$ &
$\vD$ &
$\enc(\ell)\left(\vu\vA^{(1)},\vv\vA^{(2)}\right)$ &
$\vA^{(1)}$ &
 $\vA^{(2)}$ \\ \hline

 $(0,1,0,1)$ &
--- &
 $(1,0,1,0)$ &
--- &
--- &
$\left(\arraycolsep=2pt
\begin{array}{cccc}
2 & 1 & 0 & 2 \\
2 & 4 & 0 & 3 \\
4 & 0 & 0 & 0
\end{array}
 \right)$ &
$\left(\arraycolsep=2pt
\begin{array}{cccc}
1 & 1 & 1 & 1 \\
1 & 2 & 4 & 3 \\
1 & 3 & 4 & 2 \\
1 & 4 & 1 & 4
\end{array}
 \right)$ &
$\left(\arraycolsep=2pt
\begin{array}{cccc}
1 & 1 & 1 & 1 \\
1 & 2 & 4 & 3 \\
1 & 3 & 4 & 2 \\
1 & 4 & 1 & 4
\end{array}
 \right)$ \\

$(0,1,0)$ &
4 &
 $(0,1,0)$ &
1 &
$\left(\arraycolsep=2pt
\begin{array}{cccc}
1 & 4 & 1 & 4 \\
1 & 1 & 1 & 1 \\
2 & 0 & 2 & 0
\end{array}
 \right)$
&
$\left(\arraycolsep=2pt
\begin{array}{cccc}
1 & 2& 4 & \cancel{3} \\
1 & 3 & 4 & \cancel{2} \\
2 & 0 & 3 & \cancel{0}
\end{array}
 \right)$ &
$\left(\arraycolsep=2pt
\begin{array}{cccc}
1 & 1 & 1  \\
1 & 2 & 4  \\
1 & 3 & 4  \\
\end{array}
 \right)$ &
$\left(\arraycolsep=2pt
\begin{array}{cccc}
1 & 2 & 4   \\
1 & 3 & 4  \\
1 & 4 & 1 
\end{array}
 \right)$ \\
 \hline
 
\end{tabular}
\end{center}
}

We reconstruct the data blocks using nodes or rows $1$ and $3$.
\begin{enumerate}
\item From row $1$ and $3$, we infer that row $2$ is $(1,3,4)$.
\item Hence, we obtain% 
\begin{align*}
\vhu&=(1,2,4)\left(\vA^{(1)}\right)^{-1}= (0,1,0),\\
\vhv&=(1,3,4)\left(\vA^{(2)}\right)^{-1}=(0,1,0),
\end{align*}
\noindent as desired.
\end{enumerate}
\end{example}

\begin{proof}[Proof of Proposition \ref{prop:schemev}]
As before, it suffices to show that the information stored at the nodes is given by \eqref{eq:schemev}.
Proceeding in a similar fashion as was done for the proof of Proposition \ref{prop:schemep}, and 
by referring to the linearity of the encoding maps, the proof reduces to showing that for $s\in[B]$,
\be
\left.\left(\vx^{(s)}\vA^{(s)}\right)\right|_{[\ell-1]}
-\vtx^{(s)}\vtA^{(s)}= \left. x^{(s)}_{i_s}\vA \right |_{\{i_s\}\times [\ell-1]}. \label{eq:schemev2}
\ee

As before,
we check that $\left.x^{(s)}_i\vA^{(s)}\right|_{\{i\}\times [\ell-1]}-\left.x^{(s)}_i\vtA^{(s)}\right|_{\{i\}\times [\ell-1]}=\vzero$ for $i\le i_s-1$ and
$\left.x^{(s)}_i\vA^{(s)} \right |_{\{i\}\times [\ell-1]}-\left.x^{(s)}_i\vtA^{(s)} \right |_{\{i-1\}\times [\ell-1]}=\vzero$ for $i_s+1\le i\le \ell$. 
Hence, the remaining term on the left hand side is $\left.x^{(s)}_{i_s}\vA\right|_{\{i_s\}\times [\ell-1]}$, which establishes \eqref{eq:schemev2}.
\end{proof}

\begin{remark}\hfill
\begin{enumerate}[(i)]

\item {\bf Size of $\ell$}. Recall that Vandermonde matrices exist whenever $\ell\le q$, hence this inequality has to hold true for
the scheme to be implementable.

\item {\bf Cauchy matrices based encoding as a means of reducing communication complexity}. In Scheme V, all users were required to 
transmit $(\log\ell+\log q)$ bits to describe their edits. We may save $\log q$ bits in communication complexity by having
one of the users, say user 1, transmit only its location.

We achieve this by fixing $\vA^{(1)}=\vI$. 
Suppose that $\vx^{(1)}$ had a deletion at position $i_1$.
To ensure that $\vtA^{(1)}$ remains invertible, 
we remove the $i_1$th row and $i_1$th column (in line 3).
This in turn forces us to delete the $i_1$th column in all matrices $\vA^{(s)}$.
Hence, one needs to ensure that all square submatrices of $\vA^{(s)}$ are invertible for $s\ge 2$.
It is known that $\ell \times \ell$ Cauchy matrices, taking the general form
\[ \left[ \begin{array}{ccccc}
1/(a_1-b_1) & 1/(a_1-b_2) & \ldots & 1/(a_{1}-b_{\ell-1}) & 1/(a_{1}-b_{\ell}) \\
1/(a_2-b_1) & 1/(a_2-b_2) & \ldots & 1/(a_{2}-b_{\ell-1}) & 1/(a_{2}-b_{\ell}) \\
\ldots & \ldots & \ldots & \ldots & \ldots \\
1/(a_{\ell-1}-b_1) & 1/(a_{\ell-1}-b_2) & \ldots & 1/(a_{\ell-1}-b_{\ell-1}) & 1/(a_{\ell-1}-b_{\ell}) \\
1/(a_{\ell}-b_1) & 1/(a_{\ell}-b_2) & \ldots & 1/(a_{\ell}-b_{\ell-1}) & 1/(a_{\ell}-b_{\ell}) \\
\end{array}
\right],\]
for distinct values $a_1,\ldots,a_{\ell},b_1,\ldots,b_{\ell},$ satisfy this requirement and clearly 
exist whenever $2\ell\le q$. Hence, for the protocol to be implementable, the alphabet size
and length of data blocks need to satisfy the constraint $2\ell\le q$.

\item {\bf Application to data deduplication.} Deduplication broadly refers to the process of 
removing duplicate copies of data with the objective of saving storage \cite{meister2009multi}.
%
%\footnote{See for example: \\
%\url{http://www.dell.com/downloads/global/products/pvaul/en/demystifying-deduplication.pdf},\\
%\url{http://www-download.netapp.com/edm/TT/docs/Looking_beyond_hype_Dedupe.pdf}
%}.
Scheme V may easily be integrated into a data deduplication process for a DSS
so as to remove duplicates not only amongst the users, 
but also their redundantly encoded information at the storage nodes.
%This achieves storage savings and can be done with minimal communication cost.

We describe how to accomplish this task for {\em post-process} deduplication, i.e., deduplication after the users have already written on their disks certain data blocks, say $(f_1,f_2,\ldots,f_e)\in\FF_q^e$.
Deduplication proceeds as follows:
\begin{enumerate}[(I)]
\item A central node broadcasts to all users and nodes the data $(f_1,f_2,\ldots,f_e)$ to be removed.
\item For $s\in [B]$, user $s$ scans the string $\vx^{(s)}$ for the data string  $(f_1,f_2,\ldots,f_e)$
and identifies positions $i_{s,1}, i_{s,2},\ldots,i_{s,e}$ where the blocks are stored.
\item User $s$ transmits positions $i_{s,1}, i_{s,2},\ldots,i_{s,e}$ to all connected storage nodes.
\item Each storage node and user updates information as requested by Scheme V in $e$ iterations.
\end{enumerate}
%To the best of the authors' knowledge, 
%deduplication of data in DSS where information is coded is not known in the literature.
%Note that communication costs can be further reduced if the data string is contiguous. In which case it suffices to transmit only the starting location $i_{s,1}$.
\end{enumerate}
\end{remark}

{\noindent \bf Nonuniform Deletion Model}.
To conclude this subsection, we describe how to make simple modifications to Scheme V so as to
adapt it to the general scenario where each user $\vx^{(s)}$ has $d_s$ deletions. 
Define $\ell'$ as $\max_{s\in[B]} \ell-d_s$, or equivalently, $\ell-\min_{s\in[B]} d_s$.
Our goal is to updated the matrix $\vtA^{(s)}$ so that it has dimension $(\ell-d_s)\times\ell'$, 
and reduce the content of the $n$ storage nodes to $\alpha \times \ell'$ arrays.

Suppose $\vx^{(s)}$ has deleted values $x^{(s)}_{i_{s1}},x^{(s)}_{i_{s2}},\ldots,x^{(s)}_{i_{sd_s}}$.
Define $\vd^{(s)}\triangleq \sum_{t=1}^{d_s} x^{(s)}_{i_{st}}\vA|_{\{i_{st}\}\times[\ell]}$ and modify \eqref{eq:D} as 
\begin{equation}\label{eq:D2}
\vD \triangleq \enc(\ell)\left(\vdd^{(1)},\vdd^{(2)},\ldots, \vdd^{(B)}\right). \tag{$8'$}
\end{equation}
As before, when given the deleted values and their coordinates,
each node $t$ subtracts the vector $\vD|_{\{t\}\times[\alpha]\times[\ell]}$. % of length $\alpha\ell$. 
To reduce the size of the storage nodes, we simply remove 
the coordinates belonging to $[n] \times [\alpha] \times \{i\}$ for $\ell'+1\le i\le \ell$.
%last $\ell-\ell'$ coordinates of $\vc^{*(t,j)}$ for $(t,j)\in [n]\times[\alpha]$ (see \eqref{eq:enc1}).
Suppose $\vtA^{(s)}$ is the $(\ell-d_s)\times\ell'$ matrix 
with the rows corresponding to the $d_s$ deletions and last $\ell-\ell'$ columns removed from $\vA^{(s)}$.
Then it is not difficult to check that the matrices $\vtA^{(s)}$ remain Vandermonde if the $\vA^{(s)}$ matrices
were initially chose to be Vandermonde.

Next, we establish some fundamental limits of communication complexity for the given synchronization of DSS
codes under edit protocols, and compare the performance of schemes P, V, and T with these limits.

\section{Fundamental Limits and a Tradeoff between Communication Cost and Storage Overhead}
\label{sec:fund+trade}

Suppose a user's data block is subjected to a {\em single} edit. 
Then the communication cost of both Scheme P and Scheme V is 
approximately $\log \ell+\log q$, for each pair of user and his/her connected storage node.
Hence, for a single edit, the communication cost of both schemes is near-optimal.

However, when a data block has an arbitrary number of edits, say $d$,
then the schemes require $d(\log\ell+\log q)$ bits to be communicated and it is unclear if this 
quantity is optimal, order optimal (i.e., of the same order as the optimal solution) or suboptimal.
In the next Section \ref{sec:fundamental}, we establish a simple lower bound on the communication cost
using results of Levenshtein~\cite{Levenshtein:2001} and show that Scheme P and Scheme V are within a constant factor away from the lower bound
when $d=o(\ell^{1-\epsilon})$ for a constant $0<\epsilon<1$. 
Hence, under these conditions for $d$, the schemes are order optimal.

Nevertheless, it is worth pointing out that Scheme P and Scheme V outperform Scheme T for any number of edits.
To achieve the communication cost of $O(\log \ell+\log q)$ bits,
Scheme P and Scheme V have to store certain structural information regarding the matrices $\vA^{(s)}$. 
As Scheme T does not require this storage overhead, we also analyze the tradeoff between communication cost and storage overhead 
in Section \ref{sec:tradeoff}. Our findings suggest that the use of Scheme P and Scheme V is preferred to the use of scheme T
when the number of edits is $d=o(\ell/\log\ell)$.

\subsection{Fundamental Limits}\label{sec:fundamental}

We provide next a lower bound on the communication cost between a user and a connected storage node and 
show that Scheme P and Scheme V are within a constant factor away from this lower bound
provided the number of edits is constant. 

Recall that a subsequence of a sequence is itself a sequence that can be obtained from the original sequence by deleting some elements without changing the order of the nonedited elements. Similarly, a supersequence of a sequence is itself a sequence that can be obtained from the original sequence by inserting some elements without changing the order of the nonedited elements. 

Consider the following quantity that counts the number of possible subsequences or supersequences 
resulting from $d$ edits on a data block of length $\ell$, respectively, defined via
\begin{equation}
V(\ell,d)\triangleq 
\begin{cases} 
\max_{\vx\in\FF_q^\ell} | \{\vtx: \vtx \mbox{ result from at most $d$ deletions from } \vx\}|,\\
\max_{\vx\in\FF_q^\ell} | \{\vtx: \vtx \mbox{ result from at most $d$ insertions and deletions from } \vx\}|.
\end{cases}
\end{equation}

We require $\log V(\ell,d)$ bits to describe $d$ deletions (or  $d$ insertions/deletions) to a node that contains the original sequence.
This shows that each user needs to communicate to a connected storage node 
at least $\log V(\ell,d)$ bits, given that the storage nodes contain only coded information about the original sequence and that the user knows the positions of the edits. A similar argument for establishing a lower bound for the classical two-way communication protocol was used in~\cite{orlitsky1987communication}. 

The fundamental combinatorial entity $V(\ell,d)$ was introduced by Levenshtein~\cite{Levenshtein:1966} and has since been studied by a number of authors \cite{Levenshtein:1966, Calabi.Hartnett:1969, Hirschberg.Regnier:2000, Levenshtein:2001, Mercier.etal:2008, Liron.Langberg:2012}. In particular, it is known (see~\cite[Eq. (11) and (24)]{Levenshtein:2001}) that
%The fundamental combinatorial entity $V(\ell,d)$ has been studied by Levenshtein \cite{Levenshtein:2001}, who showed in~\cite[Eq. (11) and (24)]{Levenshtein:2001} that
\begin{equation*}
\log V(\ell,d)\ge
\begin{cases} 
\log\binom{\ell-d}{d}, & \mbox{for deletions only},\\
\log (q-1)^d\binom{\ell+d}{d}, & \mbox{for deletions and insertions}.
%\log\binom{\ell-d}{d}=\Omega(\log \ell), & \mbox{for deletions only}\\
%\log (q-1)^d\binom{\ell+d}{d}=\Omega(\log \ell+\log q), & \mbox{for deletions and insertions}.
\end{cases}
\end{equation*}

When $d=o(\ell^{1-\epsilon})$ for a constant $0<\epsilon<1$, 
in the asymptotic parameter regime we have $\log\binom{\ell-d}{d}=\Omega(\log \ell)$ and 
$\log (q-1)^d\binom{\ell+d}{d}=\Omega(\log \ell+\log q)$.
Therefore, Scheme P and Scheme V are within a constant factor away from this lower bound.

\subsection{Tradeoff between Communication Cost and Storage Overhead}
\label{sec:tradeoff}

Suppose that during each round of update, a data block has a single edit.
In this case, the communication cost between each user and each connected storage node for 
both Scheme P and Scheme V is approximately $\log \ell+\log q$.
However, to achieve this cost, each user needs to store 
the associated matrix $\vA^{(\cdot)}$. Hence, each user a priori requires 
$\ell^2 \log q$ bits to store this matrix.
%But since the $\vA^{(s)}$'s have certain structure, 
%we see this overhead can be significantly reduced.
%
%\begin{enumerate}[(4.1)]
%\item {\bf Scheme P}. Since each $\vA^{(s)}$ is a permutation matrix,
%we require $\log \ell!$ or $\ell\log\ell$ bits to store $\vA^{(s)}$.
%However, if the number of edits $e$ is small, 
%we can reduce the matrix with $e\log\ell$ bits. 
%\item {\bf Scheme V}. 
%A $q\times q$ Vandermonde matrix over $\FF_q$ can be generated
%with no storage overhead.
%However, after deletions at $e$ positions, 
%we need to store these positions in order to generate the resulting $\vA^{(s)}$.
%Hence, the storage overhead is approximately $e\log \ell$ bits.
%\end{enumerate}

However, this stringent storage requirement may be easily relaxed. For any synchronization scheme, 
we can first store the description of the initial matrices%
\footnote{The initial matrix is the identity matrix for Scheme P and Scheme T, 
and a Vandermonde matrix for Scheme V. }.
Subsequently, to generate the matrices $\vA^{(s)}$, it suffices for the users to store the modifications 
to the initial matrices after the edits
and we term this information the {\em storage overhead per edit}.
For Scheme P and Scheme V, this overhead amounts to the space allocated for storing 
the locations of edits; hence, the storage overhead per edit is $\log \ell$ bits.
In contrast, for Scheme T, the storage overhead per edit is zero as the matrices $\vA^{(\cdot)}$ are 
always the identity. 

We summarize the communication cost and storage overhead features of the various schemes in Table~\ref{tab:features}. 
\vskip 5pt
{
\begin{table}
\centering
\begin{tabular}{|p{6cm}|p{4cm}|p{2.5cm}|p{2.5cm}|} 
\hline
& Scheme T & Scheme P & Scheme V\\ \hline\hline
Communication Cost between User and Storage Node per Edit  &
$(\ell-1)\log q$ (worst case) &
$\log \ell+\log q+1$ &
$\log \ell+\log q$ \\ \hline
Storage Overhead per Edit &
0 &
$\log \ell$ &
$\log \ell$ \\ \hline
Applicability &
--- &
---&
$\ell\le q$ and for deletions only \\ \hline
Total Size of Storage Nodes for Data Blocks of Length $\ell'$ &
$n\alpha\ell'$ &
$n\alpha\ell$, where $\ell$ is the length of the original data blocks&
$n\alpha\ell'$ \\ \hline
\end{tabular}
\caption{Trade-off between storage overhead and communication complexity of various synchronization protocols.}
\label{tab:features}

\end{table} 
}
\vskip 5pt

{\noindent \bf Storage overhead versus information storage}. 
Consider a single data block that has $d$ edits. Then both Scheme P and Scheme V incur a total storage overhead of $d\log\ell$ bits for the user.
But the data block itself is of size $\ell\log q$ bits. 
Therefore, for desirable storage allocation properties, one would want $d\log\ell=o(\ell)$ or the number of edits to be $o(\ell/\log\ell)$. This implies that Schemes P and V should be used only in the small/moderate edit regime. 
Nevertheless, it is preferable to use Schemes P and V for semi-static and hot data, respectively, since
the former effectively maintains the length of the original data file, while the latter scales storage requirements according to the number of unedited symbols.
\vskip 5pt

{\noindent \bf Hybrid Schemes}. 
Suppose first that we are given a constraint $\Gamma$ on the storage overhead per edit, say $0<\Gamma<\log\ell$.
By combining Scheme V and Scheme T, 
we demonstrate a scheme that achieves a lower (worst case) communication cost 
while satisfying the aforementioned storage constraint.

Pick $\gamma$ so that $\log(\gamma\ell)\le\Gamma$.
Next, divide each data block into two parts of lengths approximately $\gamma\ell$ and $(1-\gamma)\ell$.
If the edit belongs to the first part, proceed with Scheme V by sending out $\log\ell+\log q$ bits
and storing $\log(\gamma\ell)$ bits.
On the other hand, if the edit belongs to the second part, 
then simply send the entire second part of $(1-\gamma)\ell\log q$ bits and the deleted position of $\log\ell$ bits
with no storage overhead.
Hence, in the worst case, the hybrid scheme incurs a storage overhead of $\max\{\log(\gamma\ell),0\}=\log(\gamma\ell)$ bits and a communication cost of  $\max\{\log(\gamma\ell)+\log q,(1-\gamma)\ell\log q\}$ bits. The choice of $\gamma$
that minimizes the communication cost is given by the next lemma.
\begin{lemma}
The hybrid scheme has smallest communication cost if
\begin{equation}
\gamma=\min\left\{\frac{W(q^{\ell-1}\,\log\,q)}{\ell\, \log\,q}, \frac{2^\Gamma}{\ell}\right\},
\end{equation}
%\begin{equation}
%\gamma=\frac{W(q^{\ell+1}\,\log\,q)}{\ell\, \log\,q},
%\end{equation}
where $W(x)$ denotes the Lambert function, defined via $x=W(x)\,\exp(W(x))$.
\end{lemma}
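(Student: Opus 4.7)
The plan is to minimize, over $\gamma$ in an admissible range, a one-dimensional objective which is a pointwise maximum of two strictly monotone functions. Writing
\[
f_1(\gamma)=\log(\gamma\ell)+\log q \qquad\text{and}\qquad f_2(\gamma)=(1-\gamma)\ell\log q,
\]
the worst-case per-edit communication cost of the hybrid scheme is $C(\gamma)=\max\{f_1(\gamma),\,f_2(\gamma)\}$, while the storage-overhead constraint $\log(\gamma\ell)\le\Gamma$ restricts $\gamma$ to the interval $(0,\,2^\Gamma/\ell]$.

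The first observation is that $f_1$ is strictly increasing on $(0,1]$ and $f_2$ is strictly decreasing there, so their graphs meet at a unique point $\gamma^{\ast}\in(0,1)$; consequently $C$ is unimodal with a unique (unconstrained) minimizer at $\gamma^{\ast}$. Under the feasibility constraint $\gamma\le 2^\Gamma/\ell$, the optimum therefore sits at $\gamma^{\ast}$ whenever $\gamma^{\ast}\le 2^\Gamma/\ell$. In the complementary case $\gamma^{\ast}>2^\Gamma/\ell$, we have $f_1<f_2$ throughout the admissible interval, so $C=f_2$ is strictly decreasing on that interval and attains its minimum at the right endpoint $2^\Gamma/\ell$. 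The two cases combine to yield $\gamma_{\mathrm{opt}}=\min\{\gamma^{\ast},\,2^\Gamma/\ell\}$, which matches the structure of the claim.

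The remaining step is to evaluate $\gamma^{\ast}$ in closed form. I would exponentiate and rearrange the equation $\log(\gamma\ell)+\log q=(1-\gamma)\ell\log q$ to reduce it first to $\gamma\ell\cdot q^{\gamma\ell}=q^{\ell-1}$. Multiplying through by $\ln q$ and writing $q^{\gamma\ell}=e^{\gamma\ell\ln q}$ brings the equation into canonical Lambert form,
\[
(\gamma\ell\ln q)\,e^{\gamma\ell\ln q}=q^{\ell-1}\ln q,
\]
so that $\gamma\ell\ln q=W(q^{\ell-1}\ln q)$ by the defining identity $W(x)e^{W(x)}=x$, and hence $\gamma^{\ast}=W(q^{\ell-1}\ln q)/(\ell\ln q)$. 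Combined with the boundary analysis above, this gives the stated formula.

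The main delicacy, as I see it, is not mathematical but notational: the algebra naturally produces the natural logarithm $\ln q$ in both the argument of $W$ and the denominator, so the $\log$ appearing in the theorem must be read as the natural logarithm inside that expression for the manipulation to close without stray factors of $\ln 2$. Keeping this convention explicit throughout the exponentiation and Lambert-inversion steps is the one place where care is required.
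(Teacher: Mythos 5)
Your proof is correct and follows essentially the same route as the paper's: set the two branch costs equal, rewrite the resulting equation in the canonical form $z=y\,e^{y}$ with $y=\gamma\ell\log q$ so that the Lambert function gives the unconstrained crossing point, and then impose the storage constraint $\log(\gamma\ell)\le\Gamma$ to obtain the $\min$ with $2^{\Gamma}/\ell$. Your algebra yields $z=q^{\ell-1}\log q$ (with $\log$ read as the natural logarithm), which matches the lemma statement, whereas the paper's proof sketch writes $z=q^{\ell+1}\log q$ --- apparently a typo --- and your explicit monotonicity/case analysis justifying the $\min$ is a careful elaboration of the paper's one-line remark.
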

\begin{proof}
The result easily follows by observing that $\max\{\log(\gamma\ell)+\log q,(1-\gamma)\ell\log q\}$ is minimized
when $\log(\gamma\ell)+\log q=(1-\gamma)\ell\log q$. By rearranging terms, we arrive at the expression $z=y\, \exp{y}$
with $y=\gamma\, \ell \log\,q$ and $z=q^{\ell+1}\, \log\, q$, which yields the Lambert function form. The proof follows by observing that $\gamma$ is also required to satisfy $\log(\gamma\, \ell) \leq \Gamma$.
\end{proof}

The hybrid scheme is described in detail in Appendix \ref{sec:schemeh} and
Figure \ref{fig:tradeoff} illustrates the communication cost and storage overhead trade-offs achievable by the hybrid scheme. 
Observe that both Scheme P and Scheme V come close to the lower bound when one is allowed a storage overhead of $\log\ell$ bits, assuming single edits per data block.
%A natural question pertains to the existence of schemes achieving points below the curve given by the hybrid scheme. 
Note that the lower bound on the communication cost is independent on the storage overhead. It would hence be of interest to derive a lower bound that actually captures the dependency on the storage overhead.

\begin{figure}[!t]
\begin{center}
\caption{Communication Cost and Storage Overhead Tradeoff}
\label{fig:tradeoff}
\includegraphics[scale=0.5]{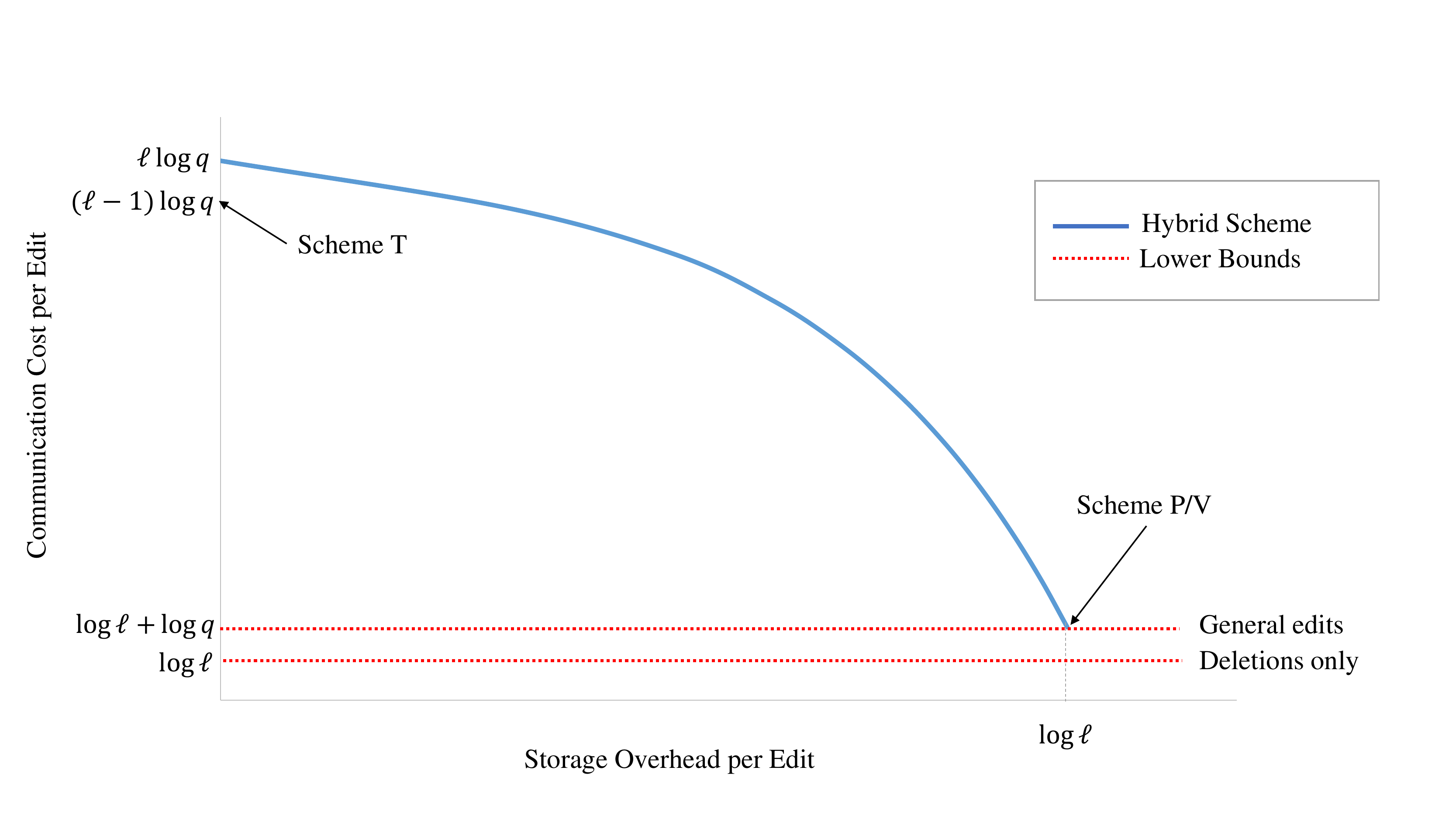}
\end{center}
\end{figure}

We conclude this section by considering a cost that takes into account {\em both} communication complexity and storage overhead. Similar to the hybrid scheme, we request certain users to communicate via Scheme P/V and others via Scheme T. Our new objective is to find an assignment that minimizes the aggregate cost of storage and communication, rather than an assignment that obeys a storage overhead bound only.

Specifically, we consider a probabilistic edit model, where
one edit occurs in each round and
the probability of data block $s$ being edited equals $p_s$.
Let $V$ be the set of users that are assigned Scheme P/V
and define $p\triangleq \sum_{s\in V} p_s$.

Then, the expected storage overhead is $p\log\ell$, 
while the expected communication cost is $\log\ell+p\log q+ (1-p)\ell\log q$.
For some predefined $\theta\ge 0$, the aggregate cost is defined as
\begin{equation}\label{eq:aggregate}
C_A(V)\triangleq \log\ell+p\log q+ (1-p)\ell\log q + \theta p\log\ell.
\end{equation}

\begin{proposition}
Let $C_A$ be given by \eqref{eq:aggregate}.
If $\theta\le (\ell-1)\log q/\log \ell$, then $C_A$ is minimized when $V=[B]$.
Otherwise, if $\theta> (\ell-1)\log q/\log \ell$, then $C_A$ is minimized when $V=\varnothing$.
\end{proposition}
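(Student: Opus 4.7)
The plan is to observe that $C_A(V)$ is an affine function of the single scalar $p=\sum_{s\in V}p_s$, with everything else in the expression independent of the choice of $V$. So the proof reduces to rewriting $C_A(V)$, isolating the coefficient of $p$, and choosing $V$ so as to make $p$ either as large or as small as possible depending on the sign of that coefficient.

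Concretely, I would first regroup
\[
C_A(V)=\log\ell+\ell\log q+p\bigl(\theta\log\ell-(\ell-1)\log q\bigr),
\]
by combining the $p\log q$ and $-p\ell\log q$ contributions into $-p(\ell-1)\log q$ and collecting the $\theta p\log\ell$ term. Since $p_s\ge 0$ for every $s$, the admissible values of $p$ range over $[0,\sum_{s\in[B]}p_s]$ and the two endpoints are attained exactly by $V=\varnothing$ (giving $p=0$) and $V=[B]$ (giving the maximum value of $p$). Hence $C_A$ is minimized at one of these two extremes.

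The case analysis on the sign of the slope $\theta\log\ell-(\ell-1)\log q$ finishes the argument: if $\theta\le(\ell-1)\log q/\log\ell$, the slope is nonpositive and $C_A$ decreases in $p$, so $V=[B]$ is optimal; if $\theta>(\ell-1)\log q/\log\ell$ the slope is strictly positive and $V=\varnothing$ is optimal. There is no real obstacle here; the only thing worth flagging is that the boundary case $\theta=(\ell-1)\log q/\log\ell$ gives a slope of zero, under which every choice of $V$ achieves the same cost, and the stated convention simply assigns $V=[B]$ as one valid minimizer.
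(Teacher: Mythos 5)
Your proof is correct and is essentially the paper's argument in a slightly different packaging: the paper performs an element-by-element exchange, computing $C_A(V\cup\{s\})-C_A(V)=p_s(\theta\log\ell-(\ell-1)\log q)$, which is exactly the slope you isolate by writing $C_A$ as an affine function of $p$ and sending $p$ to its extreme values at $V=\varnothing$ or $V=[B]$. The only cosmetic imprecision is the claim that $p$ ranges over the full interval $[0,\sum_{s}p_s]$ (only finitely many subset sums are achievable), but your argument uses only monotonicity in $p$ and the fact that the extreme achievable values occur at $\varnothing$ and $[B]$, so nothing is affected.
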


\begin{proof}
Consider $\theta\le (\ell-1)\log q/\log \ell$. Suppose $V\ne [B]$ and pick $s\notin V$.
Let $V'=V\cup\{s\}$. Then 
\begin{equation*}
C_A(V')-C_A(V)=p_s \log q - p_s \ell\log q+\theta p_s\log \ell=p_s(\theta\log\ell -(\ell-1)\log q)\le 0.
\end{equation*}
Hence, augmenting $V$ with $s$ lowers the aggregate cost and so $C_A$ is minimized when $V=[B]$.
The argument for the other case proceeds along similar lines.
\end{proof}

%\begin{figure}[t]
%\caption{General DSS and A Transformation that Preserves Reconstruction and Repair}
%\label{fig:dss}
%The usual approach to scale an existing DSS for files of length $\ell$.
%\includegraphics[scale=0.5]{UsualDSS}
%An alternative encoding method that retains reconstruction, repair and other functionals  of the original scheme.
%Here, functionals do NOT refer to functional repair. 
%Rather we are referring to properties that can be defined via functions on information on the nodes. 
%\includegraphics[scale=0.5]{OurScheme}
%\end{figure}

\section{Unknown Deletion Locations}

In the previous sections, we considered synchronization protocols for user induced edits. In this case, the positions of deletions and the values of the corresponding symbols are known and available to the users of the DSS. Many applications, including Dropbox-like file synchronization, call
for file updates when changes are made by secondary users, in which case the edits are unknown to primary or
other file users. As will be shown next, it is straightforward to accommodate the update scenarios to this model, 
provided that an additional small storage overhead is allowed. 
Again, we focus on the single edit scenario, for which
we show that one only needs to store the Varshamov-Tenengolts (VT) Syndrome~\cite{sloane2002single} of the data, in addition to 
a properly encoded original file. The single edit scenario applies to the case when frequent checks or updates are performed. When the updates appear less frequently, one may apply multiple-deletion correcting codes akin to those described in~\cite{davey2001reliable}.
%
%\subsection{Storing Varshamov-Tenengolts (VT) Syndrome}
%
%We assume that each data files $\vx^{(s)}$ has at most one deletion.
%By using certain overhead to store VT syndromes, we show that
%users are able to decipher the unknown deleted positions and values.

Consider the data string $\vx^{(s)}$.
Recall that its VT syndrome is given by 
\begin{align*}
\nu_1^{(s)} &\triangleq
\sum_{i=1}^\ell x^{(s)}_i,\\
\nu^{(s)}_2 &\triangleq
\sum_{i=1}^{\ell-1} i\chi_i \bmod{\ell},
\end{align*} 
where $\chi_i$ is the indicator function for the event $x^{(s)}_i\le x^{(s)}_{i+1}$, $i\in[\ell-1]$.

{\noindent \bf Case when all users have a single deletion}.
In addition to the matrices $\vA^{(s)}$, we have each user store its VT syndrome $(\nu_1^{(s)},\nu_2^{(s)})$, 
which is of size $\log q + \log \ell$ bits. Then prior to an update, each user retrieves its VT syndrome to compute the deleted position and its value. With this information, the user proceeds with any of the previously outlined 
update schemes.

{\noindent \bf Case where a proportion of data blocks have a deletion}.
To simplify arguments, we consider a systematic $[n,k]$ MDS code 
and the argument for a systematic $[n,k,d]$ regenerating code can be derived as a straightforward extension of the
former case. 

Observe that in the previous scheme, an additional overhead of $k(\log q+\log \ell)$ bits was required 
to store all VT syndromes for all $k$ users. We next show that by using the structure of the MDS code,
we can achieve storage savings by recording the VT syndromes of the $(n-k)$ check nodes only.
For this purpose, assume that at most $n-k$ data blocks have a deletion and that $n-k<k$.
Also, for simplicity, let $\ell= q$.

For the syndromes $\nu_1^{(1)},\nu_1^{(2)},\ldots, \nu_1^{(k)}$, 
each check node stores an additional check value that represents a linear combination of these syndromes,
so that from any collection of $k$ nodes all syndromes can be recovered.
Similarly, each check node stores another check value from the syndromes% 
\footnote{Here, we assume $q$ is a prime and the syndrome is computed modulo $q$.}
$\nu_2^{(1)},\nu_2^{(2)},\ldots, \nu_2^{(k)}$.
Therefore, in this scheme, we store an additional $2(n-k)\log q<k(\log q+\log \ell)$ bits.
Suppose that $n-k$ affected data blocks have a deletion. 
Each affected user requests the VT syndromes from the unaffected $k-(n-k)$ users
and also the coded VT syndromes from the $n-k$ check nodes.
With the $k$ values, the affected user computes its own VT syndrome and consequently uses it 
to determine the position of its deletion and its value.

\section{Conclusion}

We presented a collection of protocols for synchronization of data encoded using regenerating and repair distributed storage codes, under probabilistic and deterministic symbol deletion edits. Our protocols trade communication overhead for small increases in storage overheads, 
and may be applied in various data deduplication and system update scenarios. The gist of the approach introduced was to use intermediary encoding, which allows the updated encoded information to retain repair and reconstruction properties, although with a different encoding functionality. In this case, the synchronization protocols may be seen as a form of functional update rules -- rules that preserve functional properties but not the code structure. For the presented protocols, we provided simple fundamental lower bounds, average case performance bounds as well as an average case complexity analysis.

%\subsection{Correlated Data}
%In this work, we assume that the users are uncorrelated.
%Suppose the $\vx^{(s)}$'s are correlated (via edit or Hamming distance).
%In other words, the total information of the $B$ files is $K$ bits, where $K<B\ell\log q$.
%We can first attempt to compress the $B$ files and code them.
%It seems some things are complicated here.
%For example, consider  $\vx=(1,0,1,0)$ and $\vy=(1,1,0)$ which are at deletion distance one apart.
%So, we compress by looking at $(\vx,{\rm VT}(\vy))=((1,0,1,0),(0,1))$.
%Here, we use $(0,1)$ to encode the VT syndrome of $\vy$, which is $1$.
%Should we encode this over three nodes of length four with a $[3,2]$ MDS code?
%In the example, the nodes store $(1,0,1,0),(0,1,0,0),(1,1,1,0))$.
%Here we can use the permutation scheme to update the edits of $\vx$. 
%But when $\vy$ is edited, it is unclear what we should do.
%BUT this scheme is not efficient in the sense that we have four bits of information and 
%encoding as if there are eight bits.
%Suppose we encode over three nodes of length three with a $[3,2]$ MDS code?
%Then the nodes store store $(1,0,1),(0,0,1),(1,0,0)$.
%Here all our schemes will fail and it is a huge nightmare in fact.
%This is because the first file is broke up or `distributed' over two nodes.
\bibliographystyle{ieeetr}
\bibliography{DSS}

\appendix

\section{Hybrid Schemes}
\label{sec:schemeh}

We describe in detail the hybrid scheme that combines Scheme P and Scheme V.
Define $\ell^*=(1-\gamma)\ell$. %, where $\ell$ is the maximum length of the initial files.
Then, for $s\in[B]$, initialize 
\begin{equation*}
\vA^{(s)}\gets 
\left(
\begin{array}{cc}
\vV & \vzero\\
\vzero & \vI
\end{array}
\right),
\end{equation*}
\noindent where $\vV$ is a Vandermonde matrix of dimension $(\ell-\ell^*)\times(\ell-\ell^*)$
and $\vI$ is an identity matrix of dimension $\ell^*$.
For subsequent edits, we proceed as below.

\renewcommand*{\algorithmcfname}{Scheme H}
\begin{algorithm}[H]
\SetAlgoLined
\BlankLine
\eIf{$i_s\le \ell-\ell^*$}{
User $s$ sends to all connected storage nodes $N(s)$ the value of the deleted symbol -- $x^{(s)}_{i_s}$ ($\log q$ bits) 
 and the coordinate $i_s$ ($\log \ell$ bits)\\
 $\vA^{(s)}\gets  \vA^{(s)}$ with the $i_s$th row  removed
}
{
User $s$ computes the vector $\vd\gets \left(x^{(s)}_{\ell-\ell^*+1},x^{(s)}_{\ell-\ell^*+2},\ldots,x^{(s)}_{\ell}\right)-
\left(x^{(s)}_{\ell-\ell^*+1},\ldots,x^{(s)}_{i_s-1},x^{(s)}_{i_s+1},\ldots,x^{(s)}_{\ell},0\right)$\\
User $s$ sends to all connected storage nodes $N(s)$ the vector $\vd$  ($\ell^*\log q$ bits)\\
%the values $x^{(s)}_{\ell-\ell^*+1},x^{(s)}_{\ell-\ell^*+2},\ldots,x^{(s)}_{\ell}$ ($\ell^*\log q$ bits)\\
 User $s$ pads $\vx^{(s)}$ with a zero.
}

\For{$t\in N(s)$}{
Subtract $\vD|_{\{t\}\times[\alpha]\times[\ell]}$ where 
\begin{equation*}
\vD=
\begin{cases}
\enc(\ell)\left(\vzero,\vzero,\ldots,x^{(s)}_{i_s}\vA|_{\{i_s\}\times[\ell]},\ldots,\vzero\right), & \mbox{if $i_s\le \ell-\ell^*$},\\
\enc(\ell)\left(\vzero,\vzero,\ldots,(\vzero,\vd),\ldots,\vzero\right), & \mbox{otherwise}.\\
\end{cases}
\end{equation*}
}
\caption{{Assumption: $\vx^{(s)}$ is deleted at coordinate $i_s$.}}
\end{algorithm}

\vskip 5pt

\begin{proposition}[Scheme H]\label{prop:schemeh}
Consider an $(n,k,d,\alpha\ell,B\ell)$ DSS code and assume a single deletion per user. 
The updates in accordance to Scheme H result in an $(n,k,d,\alpha\ell, B\ell)$ DSS code.
In the worst case, each user introduces a storage overhead of $\max\{\log(\gamma\ell),0\}$ bits  
and incurs a communication cost of  $\max\{\log(\gamma\ell)+\log q,(1-\gamma)\ell\log q\}$ bits for its edit.
\end{proposition}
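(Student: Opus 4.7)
The plan is to verify two things: (i) the updated configuration still fits the intermediary encoding framework of Proposition~\ref{prop:enc1} (invoking the variable-length extension (b) of Section~\ref{sec:enc1} when necessary), and (ii) that the stated worst-case bounds on communication and storage overhead are achieved. Throughout, I would exploit the block-diagonal form $\vA^{(s)}=\mathrm{diag}(\vV,\vI)$, where $\vV$ is the $(\ell-\ell^*)\times(\ell-\ell^*)$ Vandermonde block and $\vI$ is the $\ell^*\times\ell^*$ identity, and I would split the analysis into the two cases handled by the algorithm.

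\textbf{Case 1} ($i_s\le\ell-\ell^*$). Removing the $i_s$th row preserves the block-diagonal form, with the upper block becoming a $(\ell-\ell^*-1)\times(\ell-\ell^*)$ rectangular Vandermonde of full row rank and the lower block unchanged; hence $\vtA^{(s)}$ is right-invertible of dimension $(\ell-1)\times\ell$, so the variable-length extension of the encoding is applicable. For correctness, by linearity of $\enc(\ell)$ it suffices to show that $\vx^{(s)}\vA^{(s)}-\vtx^{(s)}\vtA^{(s)} = x^{(s)}_{i_s}\vA|_{\{i_s\}\times[\ell]}$, which is precisely the vector whose encoding is subtracted in line~10 of Scheme~H. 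This follows from an argument identical to the proof of Proposition~\ref{prop:schemev}: rows of $\vtA^{(s)}$ indexed below $i_s$ agree with those of $\vA^{(s)}$, rows indexed at or above $i_s$ are shifted up by one, and the only unmatched contribution is the $i_s$th row scaled by $x^{(s)}_{i_s}$.

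\textbf{Case 2} ($i_s>\ell-\ell^*$). Here $\vA^{(s)}$ is left unchanged, so it is still invertible. Write $\vx^{(s)}=(\vx^{(s)}_1,\vx^{(s)}_2)$ according to the block partition and let $\vtx^{(s)}_2$ be the zero-padded deletion of $\vx^{(s)}_2$. Because the bottom-right block is the identity, the block-diagonal form gives $\vx^{(s)}\vA^{(s)}=(\vx^{(s)}_1\vV,\vx^{(s)}_2)$ and $\vtx^{(s)}\vA^{(s)}=(\vx^{(s)}_1\vV,\vtx^{(s)}_2)$; their difference is $(\vzero,\vd)$, matching the vector subtracted in line~10. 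The cost accounting is then a direct tally: in Case~1 the position $i_s$ lives in $[1,\gamma\ell]$, giving $\log(\gamma\ell)+\log q$ transmitted bits and the same $\log(\gamma\ell)$ bits of storage overhead to record the row removal; in Case~2 the difference $\vd$ is $\ell^*\log q=(1-\gamma)\ell\log q$ bits and the structural matrix is not modified, so no overhead accrues. Taking the worst case over the two branches produces the claimed $\max\{\log(\gamma\ell)+\log q,(1-\gamma)\ell\log q\}$ bits of communication and $\max\{\log(\gamma\ell),0\}$ bits of storage overhead.

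The main obstacle is the bookkeeping in Case~1: Proposition~\ref{prop:enc1} was originally stated for square invertible $\vA^{(s)}$, so care is needed to invoke its variable-length extension and to observe that the truncated upper-left Vandermonde block remains right-invertible. Once this is in place, the correctness identities are direct analogues of the proofs of Propositions~\ref{prop:schemev} and~\ref{prop:schemet} applied to the two halves of the block structure, and the bit counting is a straightforward maximum.
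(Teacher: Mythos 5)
Your argument is correct and follows exactly the route the paper intends: the paper omits the proof of Proposition~\ref{prop:schemeh}, stating only that it parallels Propositions~\ref{prop:schemep} and~\ref{prop:schemev}, and your two-case analysis (the Vandermonde-block case via the right-invertible rectangular extension, the identity-block case via the difference vector $\vd$) together with the bit accounting is precisely that parallel argument carried out in detail.
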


The proof is similar to the proofs of Proposition \ref{prop:schemep} and Proposition \ref{prop:schemev}
and therefore omitted.

\pagebreak

\begin{example} We illustrate Scheme H via a simple example involving a  $[3,2]$ MDS code over $\FF_5$, involving two arbitrarily chosen data blocks $\vu$ and $\vv$ shown below. 
%Consider the field of order five.

\vskip 5pt

{\scriptsize
\centering
\renewcommand{\arraystretch}{1}
\begin{tabular}{cc cccc}\hline
$\vu$ & 
$\vv$ &
$\vD$ &
$\enc(\ell)\left(\vu\vA^{(1)},\vv\vA^{(2)}\right)$ &
$\vA^{(1)}$ &
 $\vA^{(2)}$ \\ \hline

$(1,1,1,1,1,1,1)$ &
 $(1,2,3,4,3,2,1)$ &
--- &
$\left(
\arraycolsep=2pt
\begin{array}{ccccccc}
4 & 0 & 0 & 0 & 1 & 1 & 1\\
0 & 0 & 0 & 4 & 3 & 2 & 1\\
4 & 0 & 0 & 4 & 4 & 3 & 2\\
\end{array}
 \right)$ &
$\left( \arraycolsep=2pt
\begin{array}{ccccccc}
1 & 1 & 1 & 1 & 0 & 0 & 0\\
1 & 2 & 4 & 3 & 0 & 0 & 0\\
1 & 3 & 4 & 2 & 0 & 0 & 0\\
1 & 4 & 1 & 4 & 0 & 0 & 0\\
0 & 0 & 0 & 0 & 1 & 0 & 0\\
0 & 0 & 0 & 0 & 0 & 1 & 0\\
0 & 0 & 0 & 0 & 0 & 0 & 1
\end{array}
 \right)$ &
 $\left( \arraycolsep=2pt
\begin{array}{ccccccc}
1 & 1 & 1 & 1 & 0 & 0 & 0\\
1 & 2 & 4 & 3 & 0 & 0 & 0\\
1 & 3 & 4 & 2 & 0 & 0 & 0\\
1 & 4 & 1 & 4 & 0 & 0 & 0\\
0 & 0 & 0 & 0 & 1 & 0 & 0\\
0 & 0 & 0 & 0 & 0 & 1 & 0\\
0 & 0 & 0 & 0 & 0 & 0 & 1
\end{array}
 \right)$ \\

$(1,1,\cancel{1},1,1,1,1)$ &
 $(1,2,3,4,3,2,1)$ &
$\left(
\arraycolsep=2pt
\begin{array}{ccccccc}
1 & 3 & 4 & 2 & 0 & 0 & 0\\
0 & 0 & 0 & 0 & 0 & 0 & 0\\
1 & 3 & 4 & 2 & 0 & 0 & 0\\
\end{array}
 \right)$ &
$\left(
\arraycolsep=2pt
\begin{array}{ccccccc}
3 & 2 & 1 & 3 & 1 & 1 & 1\\
0 & 0 & 0 & 4 & 3 & 2 & 1\\
3 & 2 & 1 & 2 & 4 & 3 & 2\\
\end{array}
 \right)$ &
$\left( \arraycolsep=2pt
\begin{array}{ccccccc}
1 & 1 & 1 & 1 & 0 & 0 & 0\\
1 & 2 & 4 & 3 & 0 & 0 & 0\\
%1 & 3 & 4 & 2 & 0 & 0 & 0\\
1 & 4 & 1 & 4 & 0 & 0 & 0\\
0 & 0 & 0 & 0 & 1 & 0 & 0\\
0 & 0 & 0 & 0 & 0 & 1 & 0\\
0 & 0 & 0 & 0 & 0 & 0 & 1
\end{array}
 \right)$ &
 $\left( \arraycolsep=2pt
\begin{array}{ccccccc}
1 & 1 & 1 & 1 & 0 & 0 & 0\\
1 & 2 & 4 & 3 & 0 & 0 & 0\\
1 & 3 & 4 & 2 & 0 & 0 & 0\\
1 & 4 & 1 & 4 & 0 & 0 & 0\\
0 & 0 & 0 & 0 & 1 & 0 & 0\\
0 & 0 & 0 & 0 & 0 & 1 & 0\\
0 & 0 & 0 & 0 & 0 & 0 & 1
\end{array}
 \right)$ \\

$(1,1,1,1,1,1)$ &
 $(1,2,3,4,\cancel{3},2,1)$ &
$\left(
\arraycolsep=2pt
\begin{array}{ccccccc}
0 & 0 & 0 & 0 & 0 & 0 & 0\\
0 & 0 & 0 & 0 & 1 & 1 & 1\\
0 & 0 & 0 & 0 & 1 & 1 & 1\\
\end{array}
 \right)$ &
$\left(
\arraycolsep=2pt
\begin{array}{ccccccc}
3 & 2 & 1 & 3 & 1 & 1 & 1\\
0 & 0 & 0 & 4 & 2 & 1 & 0\\
3 & 2 & 1 & 2 & 3 & 2 & 1\\
\end{array}
 \right)$ &
$\left( \arraycolsep=2pt
\begin{array}{ccccccc}
1 & 1 & 1 & 1 & 0 & 0 & 0\\
1 & 2 & 4 & 3 & 0 & 0 & 0\\
%1 & 3 & 4 & 2 & 0 & 0 & 0\\
1 & 4 & 1 & 4 & 0 & 0 & 0\\
0 & 0 & 0 & 0 & 1 & 0 & 0\\
0 & 0 & 0 & 0 & 0 & 1 & 0\\
0 & 0 & 0 & 0 & 0 & 0 & 1
\end{array}
 \right)$ &
 $\left( \arraycolsep=2pt
\begin{array}{ccccccc}
1 & 1 & 1 & 1 & 0 & 0 & 0\\
1 & 2 & 4 & 3 & 0 & 0 & 0\\
1 & 3 & 4 & 2 & 0 & 0 & 0\\
1 & 4 & 1 & 4 & 0 & 0 & 0\\
0 & 0 & 0 & 0 & 1 & 0 & 0\\
0 & 0 & 0 & 0 & 0 & 1 & 0\\
0 & 0 & 0 & 0 & 0 & 0 & 1
\end{array}
 \right)$ \\

$(1,1,1,1,1,1)$ &
 $(\cancel{1},2,3,4,2,1,0)$ &
$\left(
\arraycolsep=2pt
\begin{array}{ccccccc}
0 & 0 & 0 & 0 & 0 & 0 & 0\\
1 & 1 & 1 & 1 & 0 & 0 & 0\\
1 & 1 & 1 & 1 & 0 & 0 & 0\\
\end{array}
 \right)$ &
$\left(
\arraycolsep=2pt
\begin{array}{ccccccc}
3 & 2 & 1 & 3 & 1 & 1 & 1\\
4 & 4 & 4 & 3 & 2 & 1 & 0\\
2 & 1 & 0 & 1 & 3 & 2 & 1\\
\end{array}
 \right)$ &
$\left( \arraycolsep=2pt
\begin{array}{ccccccc}
1 & 1 & 1 & 1 & 0 & 0 & 0\\
1 & 2 & 4 & 3 & 0 & 0 & 0\\
%1 & 3 & 4 & 2 & 0 & 0 & 0\\
1 & 4 & 1 & 4 & 0 & 0 & 0\\
0 & 0 & 0 & 0 & 1 & 0 & 0\\
0 & 0 & 0 & 0 & 0 & 1 & 0\\
0 & 0 & 0 & 0 & 0 & 0 & 1
\end{array}
 \right)$ &
 $\left( \arraycolsep=2pt
\begin{array}{ccccccc}
%1 & 1 & 1 & 1 & 0 & 0 & 0\\
1 & 2 & 4 & 3 & 0 & 0 & 0\\
1 & 3 & 4 & 2 & 0 & 0 & 0\\
1 & 4 & 1 & 4 & 0 & 0 & 0\\
0 & 0 & 0 & 0 & 1 & 0 & 0\\
0 & 0 & 0 & 0 & 0 & 1 & 0\\
0 & 0 & 0 & 0 & 0 & 0 & 1
\end{array}
 \right)$ \\
 \hline
\end{tabular}
}

We reconstruct the data blocks using nodes or rows $1$ and $3$.
\begin{enumerate}
\item From row $1$ and $3$, we infer that row $2$ is $(4,4,4,3,2,1,0)$.
\item Hence, we obtain% 
\begin{align*}
\vhu&=(3,2,1,3,1,1,1)\left(\vA^{(1)}\right)^{-1}= (1,1,1,1,1,1),\\
\vhv&=(4,4,4,3,2,1,0)\left(\vA^{(2)}\right)^{-1}=(2,3,4,2,1,0),
\end{align*}
\noindent as desired.
\end{enumerate}

\end{example}

\section{Probabilistic Analysis of the Communication Cost}
\label{sec:prob}

In Section \ref{sec:tradeoff}, we described a straightforward worst case analysis of the communication costs incurred by various synchronization/update protocols. In what follows, we estimate the expected communication costs of the schemes in the asymptotic regime, and in particular, show that even on average, Scheme T requires communicating $\Omega(\ell\log q)$ bits per one single round of editing, unlike schemes P and V.

For the purpose of average case analysis, we assume the following probabilistic models for edits:
\begin{enumerate}
\item[(UD)] The uniform deletions model. For simplicity, the underlying assumption is that each data block has a single deletion, at a coordinate chosen uniformly at random.

\item[(CND)] The combinatorial nonuniform deletion model. Here, the model assumes that exactly $D$ deletions occurred uniformly at random over all the $\ell$ coordinates of the $B$ users.
In addition, we assume $D=o(B\ell/\log\ell)$.

\item[(PND)] The probabilistic nonuniform deletion model. Here, we assume that each of the $\ell$ coordinates of the $B$ users is {\em independently} edited with probability $p$, where $p$ may depend on $\ell$. In addition, we assume that the edit probability is relatively small, 
i.e., that $p=o(1/\log \ell)$.
Hence, the total number of edits in a data block is $o(\ell/\log\ell)$.
\end{enumerate}

\begin{theorem} [Scheme T]\label{prop:expectationT}
Consider Scheme T. Let $C$ be the communication cost between a user and a storage node, with edits occurring 
in accordance with the (UD), (CND) and (PND) models.
Then
\[E[C]=\eta(\ell)\, \ell\log q,\] where
\begin{enumerate}
\item For model (UD), we have
 \begin{numcases}
{\lim_{\ell\to\infty} \eta(\ell)}
= \frac{B-1}{B+1}, & \mbox{if $B$ is constant}, \label{eq:TAa}\\
\ge 1-2e^{-c}, & \mbox{if $\lim_{\ell\to\infty}B(\ell)/\ell=c$}.\label{eq:TAb}
\end{numcases}
\item For model (CND), we have
\be
\lim_{\ell\to\infty} \eta(\ell)\ge \frac{D}{D+1}. \label{eq:TB}
\ee
%\begin{numcases}
%{\lim_{\ell\to\infty} \eta(\ell)}
%\ge\frac{D}{D+1}, & \mbox{if $D$ is constant}, \label{eq:TB1a}\\
%=1, & \mbox{if $\lim_{\ell\to\infty}D(\ell)=\infty$}.\label{eq:TB1b}
%\end{numcases}
\item For model (PND), we have 
\be{\lim_{\ell\to\infty} \eta(\ell)}
\ge 1-e^{-c},  \mbox{\quad if $\lim_{\ell\to\infty}B(\ell)\, p(\ell)=c$}.\label{eq:TB2}
\ee
\end{enumerate}
\end{theorem}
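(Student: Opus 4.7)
\medskip
\noindent\textbf{Proof plan.} By Proposition~\ref{prop:schemet}, the communication cost under Scheme T equals $C = |I|\log q = (\imax - \imin)\log q$, where $\imin$ and $\imax$ are the global minimum and maximum of the deletion coordinates across all $B$ users. Hence $\eta(\ell) = E[\imax - \imin]/\ell$, and each of the three claims reduces to an order-statistics computation for the minimum and maximum of ``nearly uniform'' samples on $[\ell]$.

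For model (UD), the coordinates $i_1,\ldots,i_B$ are i.i.d.\ uniform on $[\ell]$, so $P(\imin \ge k) = ((\ell-k+1)/\ell)^B$ and summing gives $E[\imin] = \sum_{k=1}^{\ell}(k/\ell)^B$. I would recognize this as a Riemann approximation to $\ell\int_0^1 x^B\,dx = \ell/(B+1)$; by symmetry $E[\imax] = \ell + 1 - E[\imin]$, and subtracting yields \eqref{eq:TAa}. For the regime $B(\ell)/\ell \to c$ the Riemann-sum approximation degenerates, so I would replace it with a Bernoulli-event argument: $P(\imin > 1) = ((\ell-1)/\ell)^B \to e^{-c}$ and similarly $P(\imax < \ell) \to e^{-c}$, so by a union bound $P(\imin = 1 \text{ and } \imax = \ell) \ge 1 - 2e^{-c} + o(1)$. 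On this event $\imax - \imin = \ell - 1$, whence $E[\imax - \imin] \ge (\ell - 1)(1 - 2e^{-c} + o(1))$, establishing \eqref{eq:TAb}.

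For model (CND), the $D$ deletion coordinates in $[\ell]$ come from a uniform random $D$-subset of $[B] \times [\ell]$, and their projections to $[\ell]$ behave like $D$ i.i.d.\ uniform samples up to a without-replacement correction of order $O(D/(B\ell)) = o(1)$ under $D = o(B\ell/\log\ell)$. Repeating the Riemann-sum calculation with $D$ in place of $B$ gives $E[\imax]/\ell \to D/(D+1)$ and $E[\imin]/\ell$ of order $1/(D+1)$, which is negligible in the $D\to\infty$ regime implicit in the bound, yielding \eqref{eq:TB}. For (PND), the events ``position $i$ of user $s$ is edited'' are independent Bernoulli$(p)$ across all $B\ell$ pairs, so $P(\imin > k) = (1-p)^{Bk}$ and symmetrically for $\imax$. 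Under $Bp(\ell) \to c$, the resulting geometric series show that both $E[\imin]$ and $\ell + 1 - E[\imax]$ are $O(1/(1-e^{-c}))$, hence $\eta(\ell) \to 1 \ge 1 - e^{-c}$, which gives \eqref{eq:TB2}.

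The main technical wrinkle is the passage from the exact combinatorial sums to their continuous asymptotics while uniformly controlling the error terms under three distinct scalings of $B$, $D$, and $p$ with $\ell$; these are routine once the regime is fixed. Conceptually the three cases share the same message: as soon as the number of edits (or edit events) is not too small, the range of the deletion coordinates occupies an asymptotically full fraction of $[\ell]$, so Scheme T transmits essentially the entire data block---exactly the pathology that motivates Schemes P and V in Section~\ref{sec:schemes}.
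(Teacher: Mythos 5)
Your treatment of model (UD) is fine and essentially the paper's: for constant $B$ you reproduce the computation $E[\imin]/\ell\to 1/(B+1)$ plus the symmetry $E[\imax]=\ell+1-E[\imin]$, and for $B(\ell)/\ell\to c$ your union-bound event argument ($P(\imin=1,\imax=\ell)\ge 1-2e^{-c}-o(1)$, span $\ell-1$ on that event) is a valid variant of the paper's direct bound $E[\imin]/\ell\le 1/\ell+(1-1/\ell)^{B+1}\to e^{-c}$. The problem is your opening reduction, $C=(\imax-\imin)\log q$ \emph{for all three models}. That identity holds only under (UD), where every block has exactly one deletion, so every slice beyond $\imax$ is a common one-position shift of an original slice. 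Under (CND) and (PND) different users have different numbers of deletions, so the slices beyond the largest edit position involve user-dependent shifts and must also be rewritten; Scheme T then has to update everything from $\imin$ to the end of the block, i.e.\ $|I|=\ell-\imin$, which is exactly what the paper uses, reducing the nonuniform cases to an \emph{upper} bound on $E[\imin]$ alone.

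For (PND) your misidentification happens to be harmless, since $\imax-\imin\le \ell-\imin$ and your geometric-series estimate gives $\ell-E[\imax]=O(1)$ as well, so either accounting yields $\eta\to 1\ge 1-e^{-c}$. For (CND), however, your route cannot reach \eqref{eq:TB}: with $D$ (essentially) i.i.d.\ uniform positions you correctly get $E[\imax]/\ell\to D/(D+1)$ and $E[\imin]/\ell\to 1/(D+1)$, hence $E[\imax-\imin]/\ell\to (D-1)/(D+1)$, which is strictly below $D/(D+1)$ for every fixed $D$; there is no ``$D\to\infty$ regime implicit in the bound,'' since the model allows constant $D$ (the only assumption is $D=o(B\ell/\log\ell)$). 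The case $D=1$ makes the failure stark: your span is identically $0$, while the theorem asserts $\lim_{\ell\to\infty}\eta(\ell)\ge 1/2$. The repair is the paper's argument: take $|I|=\ell-\imin$, use $P(\imin\ge i)=\binom{B(\ell-i+1)}{D}\big/\binom{B\ell}{D}$, and bound $E[\imin]\le \frac{B\ell+1}{B(D+1)}+\frac{B-1}{B}$ via the hockey-stick identity, so that $E[\imin]/\ell\le 1/(D+1)+o(1)$ and hence $\eta(\ell)\ge D/(D+1)-o(1)$.
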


\begin{proof}
%We prove \eqref{eq:TAa} and \eqref{eq:TB1a} using tools from order statistics.

Recall the definitions of $\imax$, $\imin$ and $I$ from Section \ref{sec:traditional}. 
For the model (UD), we have $|I|=\imax -\imin$, while $|I|=\ell-\imin$ holds 
for models (CND) and (PND). Hence, $\eta(\ell)\ell=E[|I|]=E[\imax]-E[\imin]=\ell- 2E[\imin]$ for model (UD), 
and $\eta(\ell)\ell=\ell-E[\imin]$ for models (CND) and (PND).
Therefore, our problem reduces to estimating $E[\imin]$ for the various models.
To do so, we follow the standard derivation methods in order statistics (for example, see David and Nagaraja \cite{DavidNagaraja:1970}), and point out that the range (span, support) of the observed deletion positions is studied in order statistics under the name \emph{sample range}. 

Recall that $E[\imin]=\sum_{i=1}^\ell \prob(\imin\ge i)$ and
observe that 
\begin{equation*}
\prob(\imin\ge i)=
\begin{cases}
\left(\frac{\ell-i+1}{\ell}\right)^B, &\mbox{for model (UD)},\\
\frac{\binom{B(\ell-i+1)}{D}}{\binom{B\ell}{D}}, &\mbox{for model (CND)}\\
(1-p)^{B(i-1)}, &\mbox{for model (PND)}.\\
\end{cases}
\end{equation*}

Hence, for model (UD), 
we have $E[\imin]/\ell=\sum_{i=1}^\ell i^B/\ell^{B+1}$. 
When $B$ is constant, $E[\imin]/\ell\to 1/(B+1)$ as $\ell\to \infty$
yielding \eqref{eq:TAa}. When $\lim_{\ell\to\infty} B/\ell=\kappa$,
we have
\[
\frac{E[\imin]}{\ell}\le \frac 1\ell+(\ell-1)\frac{(\ell-1)^B}{\ell^{B+1}}=\frac 1\ell+\left(1-\frac 1\ell\right)^{B+1}\to e^{-\kappa},
\]
\noindent yielding \eqref{eq:TAb}.

On the other hand, for model (CND),  we have%
\footnote{A similar formulation of $E[\imin]$ can be found in \cite{balakrishnan2003bounds}. } 
\begin{equation*}
E[\imin]=\frac{\sum_{i=1}^\ell \binom{Bi}{D}}{\binom{B\ell}{D}} \le \frac{\sum_{i=D}^{B\ell}\binom{i}{D}+(B-1)\binom{B\ell}{D}}{ B\binom{B\ell}{D}}
=\frac{\binom{B\ell+1}{D+1} }{B\binom{B\ell}{D}}+\frac{B-1}{B}
=\frac{B\ell+1}{B(D+1)}+\frac{B-1}{B}
\end{equation*}
As a result, $\lim_{\ell\to\infty}E[\imin]/\ell\le 1/(D+1)$, yielding \eqref{eq:TB}.

Finally, for model (PND), we have
\[
E[\imin]=\sum_{i=1}^\ell (1-p)^{B(i-1)\ell}\le 1+(\ell-1)(1-p)^B.
\]
If $\lim_{\ell\to\infty}B p=c$, then $\lim_{\ell\to\infty} E[\imin]/\ell\le e^{-c}$ yields~\eqref{eq:TB2}.

\end{proof}

\begin{proposition} [Scheme P/V]\label{prop:expectationPV}
Consider either Scheme P or Scheme V.
Define $C$ to be the communication cost between one selected user and a connected storage node,
with edits occurring according to the above models.
Then 
\begin{enumerate}
\item For model (UD),
 \begin{equation*}
E[C]=\log\ell+\log q .
\end{equation*}

\item For model (CND),

\begin{equation*}
E[C]=\frac{D}{B}(\log\ell+\log q) .
\end{equation*}

\item For model (PND), 

%if $\lim_{\ell\to\infty}p\ell=c$,
%
%\begin{equation*}
%\lim_{\ell\to\infty} E[C]=c(\log\ell+\log q).
%\end{equation*}
%
%Otherwise, we have
\[ E[C]=p\ell(\log\ell+\log q).\]
\end{enumerate}
\end{proposition}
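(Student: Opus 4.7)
The plan is to reduce this claim to a straightforward application of linearity of expectation, leveraging the per-edit cost bounds already established in Propositions \ref{prop:schemep} and \ref{prop:schemev}. Those results show that for either Scheme P or Scheme V, each single edit at a user's data block triggers a communication of at most $\log\ell + \log q$ bits (plus at most one bit to signal insertion versus deletion, which we absorb) from that user to each connected storage node. Crucially, unlike Scheme T, this cost is per-edit and does not depend on the positions of edits made by other users, so the updates decouple completely across edits.

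First I would fix a single selected user $s$ and let $X_s$ denote the number of edits occurring in $\vx^{(s)}$ during one synchronization round. Since the user invokes Scheme P/V once per edit and each invocation costs exactly $\log\ell+\log q$ bits to a given connected storage node, the total communication cost satisfies
\begin{equation*}
C = X_s \cdot (\log\ell + \log q),
\end{equation*}
so $E[C] = E[X_s](\log\ell+\log q)$. The three cases then reduce to computing $E[X_s]$ under the respective edit models.

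For model (UD), by definition each user has exactly one deletion, so $X_s = 1$ deterministically and $E[C] = \log\ell + \log q$. For model (CND), the $D$ deletions are distributed uniformly at random over the $B\ell$ coordinates of all users, so by symmetry (or by writing $X_s = \sum_{i=1}^{\ell} \mathbf{1}[\text{position $(s,i)$ is deleted}]$ and using that each position is deleted with probability $D/(B\ell)$) we obtain $E[X_s] = D/B$, yielding $E[C] = (D/B)(\log\ell + \log q)$. For model (PND), $X_s$ is a sum of $\ell$ independent Bernoulli$(p)$ indicators, so $E[X_s] = p\ell$ and hence $E[C] = p\ell(\log\ell+\log q)$.

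There is no real obstacle here — the proof is essentially bookkeeping — but I would want to be explicit that the decoupling across edits is what enables the calculation: because Scheme P/V communicates only the \emph{location and value} of a single edit (rather than a span that grows with other users' behavior, as in Scheme T), the per-edit cost is a constant random variable, and linearity of expectation finishes the job. This is precisely the structural advantage over Scheme T captured by Theorem \ref{prop:expectationT}, and the contrast between $p\ell(\log\ell+\log q)$ here and $\eta(\ell)\,\ell\log q$ there is what motivates the $p=o(1/\log\ell)$ regime assumption in which Scheme P/V strictly dominates on average.
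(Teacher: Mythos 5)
Your proposal is correct and follows essentially the same route as the paper: the cost per user--node pair is $d(\log\ell+\log q)$ where $d$ is that user's number of edits, and the expected number of edits under (UD), (CND), (PND) is $1$, $D/B$, and $p\ell$ respectively, giving the result by linearity of expectation. Your added remarks on the decoupling across edits and the symmetry argument for (CND) are just a more explicit spelling-out of the paper's one-line justification.
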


\begin{proof}
Note that the communication cost between a user and a connected storage node for Schemes P/V 
is given by $d(\log q+\log\ell)$, where $d$ denotes the corresponding number of edits of the user.
Since for the models (UD), (CND) and (PND), the expected number of edits are one, $D/B$ and $p\ell$, respectively, 
the proposition follows after straightforward algebraic manipulations.
\end{proof}

In Appendix \ref{app:reduce}, we explored other methods for reducing the 
expected communication cost of the Schemes P/V. The idea behind the method is to observe the 
location of the edit with largest index, and then communicate all deletion positions based on a
rescaled data block length equal to that index. As an illustration, if the largest index of an edit in a data block of length $\ell$
is of the order $o(\ell)$, then the locations of the deletions would only require $o(\log\,n)$ bits for encoding.
For the uniform and at random deletion models, these savings are unfortunately not significant.

We conclude this analysis by comparing the relationships between the expected communication costs of the various models. Let $C_T$ and $C_P$ be the expected communication cost in Scheme T and Scheme P/V, respectively.
In the following analysis, asymptotics are computed in $\ell$ and we assume that $q$ is either fixed or that $q=O(\ell)$.

We note that under our model assumptions, we have $\lim_{\ell\to\infty} C_P/C_T=0$.
%In models (UD), (CND) and when $pB\ell$ converges to a constant in model (PND),
%we have the expected communication cost in Scheme T to be $\Omega(\ell\log q)$ bits, while 
%the expected communication cost in Scheme P/V is $\Theta(\log \ell +\log q)$.
\begin{enumerate}
\item For model (UD), we have $C_T=\Omega(\ell\log q)$, while $C_P=\Theta(\log \ell +\log q)$.
Hence, $\lim_{\ell\to\infty} C_P/C_T=0$.

\item For model (CND), we have $C_T=\Omega(\ell\log q)$ or $C_T\ge \kappa \ell\log q$, for some constant $\kappa$.
Then 
\[\lim_{\ell\to\infty} \frac{C_P}{C_T}
\le \lim_{\ell\to\infty}\frac{D/B(\log\ell+\log q)}{ \kappa \ell\log q}\to 0,\]
since $D=o(B\ell/\log\ell)$.

\item For model (PND), we have $C_T=\Omega(\ell\log q)$ or $C_T\ge \kappa \ell\log q$, for some constant $\kappa$.
Then 
\[\lim_{\ell\to\infty} \frac{C_P}{C_T}
\le \lim_{\ell\to\infty}\frac{p\ell(\log\ell+\log q)}{ \kappa \ell\log q}\to 0,\]
since $p=o(1/\log\ell)$.

%\item For model (PND), we have
%\begin{align*}
%\lim_{\ell\to\infty} \frac{C_P}{C_T}
%&\le \lim_{\ell\to\infty}\frac{p\ell(\log\ell+\log q)}{ (1-1/(pB\ell))(1-(1-p)^{1/p})\ell\log q}\\
%&= \lim_{\ell\to\infty}\frac{p(\log\ell+\log q)}{ (1-1/(pB\ell))(1-(1-p)^{1/p})\log q}\to 0, 
%\end{align*}
%\noindent since $p\log\ell\to 0$, $1-1/(pB\ell)\to 1$ or $1-1/(pB\ell)\to 1-1/c$, $(1-p)^{1/p}\to e^{-1}$ as $\ell\to \infty$.
\end{enumerate}

Consequently, for all three probabilistic models and our proposed protocols, we achieve a communication complexity that is asymptotically negligible compared to that of the traditional approach given by Scheme T.

\section{Reducing the Expected Communication Cost for Schemes P/V}
\label{app:reduce}

Suppose that the largest edited coordinate in a user data block with $d$ edits equals $\imax$. In this case, edits are confined to a shorter block than the length of the data block, and sending $d\ceiling{\log \imax}$ bits of information for $d$ edits suffices. In what follows, we compute the expected value of  $\ceiling{\log \imax}$ given $d$ edits, in order to determine how much one may save in the communication rate due to this simple observation.

First, observe that for the independent, uniform deletion model, we have
\begin{align*}
E[\ceiling{\log\imax} \mbox{ given $d$ edits}]&=\sum_{i=1}^{\ceiling{\log\ell}} \prob(\ceiling{\log\imax}\ge i)=\sum_{i=1}^{\ceiling{\log\ell}} \prob(\imax\ge 2^{i-1}+1)\\
&=\sum_{i=1}^{\ceiling{\log\ell}} \left(1-\frac{\binom{2^{i-1}}{d}}{\binom{\ell}{d}}\right)=\ceiling{\log\ell}-\sum_{i=1}^{\ceiling{\log\ell}} \frac{\binom{2^{i-1}}{d}}{\binom{\ell}{d}}.
\end{align*}

Let $s(\ell,d)\triangleq\sum_{i=1}^{\ceiling{\log\ell}}{\binom{2^{i-1}}{d}}/{\binom{\ell}{d}}$ be the second term in the previously derived expression, representing the expected number of bits that can be saved
by using the reduced range encoding scheme. It is straightforward to see that 
$\sum_{i=1}^{\ceiling{\log\ell}}{\binom{2^{i-1}}{d}}/{\binom{\ell}{d}}< 
\sum_{i=1}^{\ceiling{\log\ell}} 2^{i-1}/\ell\le (2\ell-1)/\ell<2$.

Assuming model (CND), the expected communication cost per user-storage node pair follows from a generating function approach that gives
\begin{align*}
&\frac{1}{\binom{B+D-1}{D}}[\lambda^D] (1-\lambda)^{-(B-1)}\sum_{d\ge 1}d\lambda^d(\ceiling{\log\ell}-s(\ell,d))\\
&=\frac{\ceiling{\log\ell}}{\binom{B+D-1}{D}}[\lambda^D] (1-\lambda)^{-(B-1)}\sum_{d\ge 1}d\lambda^d 
- \frac{1}{\binom{B+D-1}{D}}[\lambda^D] (1-\lambda)^{-(B-1)}\sum_{d\ge 1}d\lambda^ds(\ell,d)\\
&=\frac{\ceiling{\log\ell}}{\binom{B+D-1}{D}}[\lambda^D] \frac{\lambda}{(1-\lambda)^{B+1}} 
- \frac{1}{\binom{B+D-1}{D}}\sum_{d= 1}^D ds(\ell,d)[\lambda^{D-d}] (1-\lambda)^{-(B-1)}
\end{align*}
\begin{align*}
&= \frac{\binom{B+D-1}{D-1}}{\binom{B+D-1}{D}}\ceiling{\log\ell}
- \frac{1}{\binom{B+D-1}{D}}\sum_{d= 1}^D ds(\ell,d)\binom{B+D-d-2}{D-d}\\
&=\frac{D}{B}\ceiling{\log\ell} 
- \frac{1}{\binom{B+D-1}{D}}\sum_{d= 1}^D \sum_{i=1}^{\ceiling{\log\ell}}\frac{d\binom{2^{i-1}}{d}\binom{B+D-d-2}{D-d}}{\binom{\ell}{d}},
\end{align*}
\noindent where $[x^k]p(x)$ denotes the coefficient of $x^k$ in the polynomial $p(x)$.
\vskip 5pt

Assuming model (PND), the expected communication cost may be evaluated as
\begin{align*}
&\sum_{d= 1}^{\ell} dp^d(1-p)^{\ell-d}\binom{\ell}{d}(\ceiling{\log\ell}-s(\ell,d))=\ceiling{\log\ell}\sum_{d= 1}^{\ell} dp^d(1-p)^{\ell-d}\binom{\ell}{d}-\sum_{d= 1}^{\ell} dp^d(1-p)^{\ell-d}\binom{\ell}{d}s(\ell,d)\\
&=p\ell\ceiling{\log\ell}-\sum_{d= 1}^{\ell} dp^d(1-p)^{\ell-d}\binom{\ell}{d}s(\ell,d)=p\ell\ceiling{\log\ell}
-\sum_{d= 1}^{\ell}\sum_{i=1}^{\ceiling{\log\ell}}{dp^d(1-p)^{\ell-d}\binom{2^{i-1}}{d}}.
\end{align*}

Assuming model (CND) with different values of $\lambda_s$, the expected communication cost for user $s$ is given by 
\begin{align*}
&\frac{\displaystyle\sum_{d_1+d_2+\cdots+d_B=D}\lambda_1^{d_1}\lambda_2^{d_2}\cdots\lambda_B^{d_B}d_s(\ceiling{\log\ell}-s(\ell,d_s))}%
{\displaystyle\sum_{d_1+d_2+\cdots+d_B=D}\lambda_1^{d_1}\lambda^{d_2}\cdots\lambda_B^{d_B}}\\
&=\frac{\displaystyle\sum_{d_s=0}^D\lambda_s^{d_s}d_s(\ceiling{\log\ell}-s(\ell,d_s))\sum_{\sum_{i\ne s}d_i=D-d_s}\lambda_1^{d_1}\lambda_2^{d_2}\cdots\lambda_{s-1}^{d_{s-1}}\lambda_{s+1}^{d_{s+1}}\cdots\lambda_B^{d_B}}%
{\displaystyle\sum_{d_1+d_2+\cdots+d_B=D}\lambda_1^{d_1}\lambda^{d_2}\cdots\lambda_B^{d_B}}\\
&=\frac{\displaystyle\sum_{d_s=0}^D\lambda_s^{d_s}d_s(\ceiling{\log\ell}-s(\ell,d_s))h_{D-d_s}(\lambda_1,\lambda_2,\ldots,\lambda_{s-1},\lambda_{s+1},\ldots,\lambda_{B})}%
{h_{D}(\lambda_1,\lambda_2,\ldots,\lambda_{B})},\\
\end{align*}

\noindent where $h_k(X_1,X_2,\ldots,X_n)$ is the {\em complete homogeneous symmetric polynomial of degree $k$ in $n$ variables}.

If $\lambda_s=\lambda^s$, we can specialize to
\begin{align}
h_{k}(\lambda,\lambda^2,\ldots,\lambda^{B}) 
&=[z^k]\prod_{1\le i\le B} (1-\lambda^iz)^{-1}\label{eq:all}\\
h_{k}(\lambda,\lambda^2,\ldots,\lambda^{s-1},\lambda^{s+1},\ldots,\lambda^{B}) 
&=[z^k]\prod_{1\le i\le B, i\ne s} (1-\lambda^iz)^{-1}\label{eq:allexcepts}.
\end{align}
We remark that \eqref{eq:all} yields the generating function that counts the number of partitions with at most $k$ parts of size at most B,
while \eqref{eq:allexcepts} yields the generating function that counts the number of partitions with at most $k$ parts of size at most B and no part of size $s$.
\end{document}